\documentclass{scrartcl}

\def\fullversion{the appendix\xspace}

\usepackage{listings}
\usepackage{auxlib}
\usepackage{multirow}
\setlist[enumerate,1]{label=\arabic*), ref=\arabic*)}
\setlist[enumerate,2]{label=\alph*), ref=\alph{enumi}.\arabic*)}
\crefname{enumi}{Statement}{Statements}
\setcounter{MaxMatrixCols}{100}
\usepackage[letterpaper,margin=1in]{geometry}
\usepackage{cfr-lm}

\addtokomafont{disposition}{\rmfamily}

\lstset{
}
\makeatletter
\lst@AddToHook{OnEmptyLine}{\addtocounter{lstnumber}{-1}\vspace{-0.5\baselineskip}}%
\makeatother

\usepackage{framed}

\def\Vardef#1{%
	\expandafter\newcommand\csname #1\endcsname[1]{%
		\def\first{##1}%
		\def\second{*}%
		\ensuremath{\mathsf{\MakeLowercase #1}\ifx\first\second^{*}\else_{##1}\fi}%
	}%
}
\def\Vardefx#1#2{%
	\expandafter\newcommand\csname #1\endcsname[1]{%
		\ensuremath{\mathsf{#2}_{##1}}%
	}%
}
\def\T_#1{\ensuremath{T_{\text{\normalfont\textsc{\MakeLowercase{#1}}}}}}

\usepackage{float}
\floatname{algoflt}{Algorithm}
\newfloat{algoflt}{tbp}{loa}
\crefname{algoflt}{Algorithm}{Algorithms}

\usepackage[font=small,labelfont=bf]{caption}

\def\?#1{}
\newcommand\lcomment[2][t]{\color{gray}{\em\footnotesize$\triangleright$ #2}}

\newcommand\comment[2][t]{\color{gray}\hfill\hspace{0.5em}\begin{minipage}[#1]{0.3333\textwidth}{\em\footnotesize$\triangleright$ #2}\end{minipage}}
\def\dcmnumberstyle{}
\def\dcmbasicstyle{}
\makeatletter
\lstnewenvironment{lstalgo}[2][htb]{%
\lstset{%
	mathescape,%
	tabsize=4,%
	numbers=left,%
	numberstyle=\tiny\dcmnumberstyle,%
	numberblanklines=false,%
	basicstyle=\rmfamily\footnotesize\dcmbasicstyle,%
	columns=fullflexible,%
	escapechar=\#,%
	emph={%
		[1]if, else, then, and, or, return, exit, not, output %
		},
	emphstyle={%
		[1]\rmfamily\bfseries%
		},%
	}%
\setlength{\topsep}{0pt}%
\setlength{\parindent}{0pt}%
\setlength{\FrameSep}{0pt}%
\@skiphyperreftrue
\global\def\dcmnumberstyle{\addtocounter{lstnumber}{-1}\color{white}\global\def\dcmnumberstyle{\tiny}
}%

\algoflt[#1]\caption{#2}%
\framed\hspace*{2em}\minipage{\textwidth-3em}}{\@skiphyperreffalse\endminipage\endframed\endalgoflt}
\makeatother
\crefname{lst@lineno}{Line}{Lines}

\def\BigO#1{\smash{\LDAUOmicron{#1}}}
\def\LittleO#1{\smash{\LDAUomicron{#1}}}
\def\BigOmega#1{\smash{\LDAUOmega{#1}}}

\def\BigTheta#1{\smash{\LDAUTheta{#1}}}
\def\BigOTilde#1{\smash{\SOFTOmicron{#1}}}

\everydisplay\expandafter{\the\everydisplay%
\let\BigO\LDAUOmicron%
\let\LittleO\LDAUomicron%
\let\BigOmega\LDAUOmega%
\let\BigTheta\LDAUTheta%
\let\BigOTilde\SOFTOmicron%
}

\let\Probability\Prob

\let\epsilon\varepsilon

\newcommand{\Exp}[1]{\ensuremath{\exp\left(#1\right)}}

\newcommand\numberthis{\addtocounter{equation}{1}\tag{\theequation}}
\def\lefttag#1{\tag*{\makebox[0pt][l]{\hspace*{-\linewidth}#1}}}

\usepackage{xspace}
\def\ttrue{\ensuremath{\text{\normalfont\textsc{true}}}\xspace}
\def\tfalse{\ensuremath{\text{\normalfont\textsc{false}}}\xspace}

\def\Done#1#2{\ensuremath{\mathsf{\ifcsname DoneName#1\endcsname\csname DoneName#1\endcsname\else Done#1\fi}_{#2}}}
\def\DoneDef#1#2{\expandafter\def\csname DoneName#1\endcsname{#2}}
\DoneDef{1}{leaderDone}
\DoneDef{2}{searchDone}

\Vardef{DoneSearch}
\Vardef{DoneLeader}
\Vardef{Phase}
\Vardef{ExactPhase}
\Vardef{Clock}
\Vardef{Ticks}
\Vardef{FirstTick}

\Vardef{Junta}
\Vardef{Leader}
\Vardef{Error}
\def\K#1{\ensuremath{k_{#1}}}
\def\L#1{\ensuremath{l_{#1}}}

\def\Counter#1{j_#1}
\Vardef{Flip}
\Vardef{Level}
\Vardef{Active}
\Vardef{FinalRound}
\Vardef{PreviousLoad}

\def\protocol#1{{\normalfont\textsc{#1}}\xspace}
\def\Search Protocol{\protocol{Search} Protocol\xspace}

\def\matrixbracex#1#2{\underbrace{#1}_{\text{\makebox[0pt][c]{\footnotesize #2}}}}

\usepackage{dcmtitle}

\title{On Counting the Population Size}
\author[1]{Petra Berenbrink}
\author[1]{Dominik Kaaser}
\author[2]{Tomasz Radzik}
\affiliation{Universität Hamburg\\Hamburg, Germany\\\textit{\{petra.berenbrink, dominik.kaaser\}@uni-hamburg.de}}
\affiliation{King's College London\\London, UK\\\textit{tomasz.radzik@kcl.ac.uk}}
\begin{document}
\maketitle

\begin{abstract}
We consider the problem of counting the population size in the \emph{population
model}. In this model, we are given a distributed system of $n$ identical
agents which interact in pairs with the goal to solve a common task. In each
time step, the two interacting agents are selected uniformly at random. In this
paper, we consider so-called \emph{uniform} protocols, where the actions of two
agents upon an interaction may not depend on the population size $n$. We
present two population protocols to count the size of the population: protocol
\protocol{Approximate}, which computes with high probability either $\floor*{\log n}$
or $\ceil*{\log n}$, and protocol \protocol{CountExact}, which computes the exact population
size in optimal $\BigO{n\log{n}}$ interactions, using $\BigOTilde{n}$ states.
Both protocols can also be converted to \emph{stable} protocols that give a
correct result with probability $1$ by using an additional multiplicative
factor of $\BigO{\log{n}}$ states.

\end{abstract}

\section{Introduction} \label{sec:introduction}

In this paper we consider the problem of counting the population size in the
probabilistic \emph{population model}. The model was introduced in
\cite{DBLP:journals/dc/AngluinADFP06} to model distributed systems of
resource-limited mobile agents, which interact with each other in order to
solve a common task. The computation of a probabilistic \emph{population
protocol} can be viewed as a sequence of pairwise interactions of randomly
chosen agents. In each interaction, the two participating agents observe each
others' states and update their own state according to a transition function
common to all agents. 

In this setting, we are interested in so-called \emph{uniform} population
protocols, where the transition function does not depend on the size of the
population, so it can be applied to any population size. In the original
definition of population protocols in \cite{DBLP:journals/dc/AngluinADFP06},
each agent is a copy of the same finite state machine. Such a protocol is by
definition uniform, as its state space has constant size (independent of the
size of the population). However, more recent results have shown that many
problems can be solved much faster if agents are equipped with a number of
states that grows with the population size. For example, a simple protocol with
a constant number of states solves the majority problem in expected
$\BigO{n^2}$ interactions \cite{DBLP:journals/siamco/DraiefV12,
DBLP:journals/dc/MertziosNRS17}. On the other hand, a number of protocols have
been proposed which solve the majority problem in $\BigO{n\polylog n}$
interactions using $\BigO{\polylog n}$ states
\cite{DBLP:conf/soda/AlistarhAEGR17, DBLP:conf/soda/AlistarhAG18,
DBLP:conf/podc/BilkeCER17, DBLP:conf/wdag/BerenbrinkEFKKR18}. These protocols
are all non-uniform since their transition functions refer to values which have
to be $\Theta(\log n)$. We note that $\BigOmega{n\log n}$ interactions are
required to reach a positive constant probability that each agent participates
in at least one interaction, so $\BigOmega{n\log n}$ is a lower bound on the
number of interactions of a protocol which solves any nontrivial problem.

Turning to the problem of counting the population size, either exactly or
approximately, we naturally have to consider population protocol models with
growing memory, where lower bounds on the required number of states depend on
the desired counting accuracy. The exact counting of the population size
requires $\BigOmega{n}$ states, while estimating the population size up to a
constant factor requires $\BigOmega{\log n}$ states. 

There is a simple and uniform protocol for exact population counting, which
completes in expected $\Theta(n^2)$ interactions  and uses $\Theta(n^2)$
states: the agents start with one token each and keep combining the tokens into
bags, propagating at the same time the maximum size of a bag and using that
maximum as their current output. The transition function of this protocol does
not refer in any way to (any estimate of) the size of the population.

In this paper we are interested in uniform population counting protocols which
run in $\BigO{n\polylog n}$ interactions and use a small number of states
(relatively to the lower bounds). We present new protocols for exact and
approximate population counting, which use a significantly smaller number of
states than the previously best protocols shown in
\cite{DBLP:journals/corr/abs-1808-08913,DBLP:conf/wdag/DotyEMST18}.
Our protocols have also further desirable properties. Our exact protocol is the
first protocol for this problem which completes the computation within
asymptotically optimal $\BigO{n\log{n}}$ interactions. A variant of our
approximate protocol is the first $\BigO{\polylog n}$-state, $\BigO{n\polylog
n}$-interaction protocol \emph{always} converging to a value which is within a
constant factor from the exact size of the population. To give formal
statements of the previous results and our new results, we have to introduce
first some details of the model and the way the efficiency of population
protocols is measures.

\subsection{Computation Model} \label{sec:related-work}

The computation model is a population of $n$ agents which are capable of performing local
computations. 
A population protocol is specified by a state space $Q$, an output domain $O$,
a transition function $\delta: Q\times Q \rightarrow Q\times Q$,
and an output function $\omega: Q \rightarrow O$.
Each agent has a state $q \in Q$, which is updated during interactions.
The current output of an agent in state $q$ is $\omega(q)$.
The current \emph{configuration} of the system 
is the vector from $Q^n$ with the current states of the agents.
The computation of a population protocol is a sequence of pairwise interactions of agents.
In every time step, 
a \emph{probabilistic scheduler} selects independently and uniformly at
random a pair of agents for interaction, with the first 
agent called the \emph{initiator} and the second the \emph{responder}.
During the interaction the
agents update their states by applying the transition function $\delta$. 
Such an update is
denoted by $(x, y) \rightarrow (x', y')$, where $(x,y)$ refers to the states
of the agents
before the interaction and $(x', y')= \delta(x,y)$ to the states after the interaction. 

A given problem which we want to solve by population protocols
specifies the set of initial (input) configurations, the output domain $O$, and
the desired (output) configurations for given input configurations. 
For the exact population counting problem, all agents are initially in the same state $q_0$ and 
the output domain is the set of positive integers. 
A desired configuration is when all agents output correctly the (exact) size of the population.
In the \emph{leader election} problem, 
which arises as a sub-problem in many population protocols, including in the protocols
for the population counting problem,
all agents start with the same initial state 
and the output domain is $O=\set{\text{\emph{leader}}, \text{\emph{follower}}}$.
A desired configuration is when exactly one agent outputs that it is the leader.

The following two definitions are commonly used 
to capture important aspects of the time efficiency of population protocols.
 The \emph{convergence time $T_C$} of an execution of a protocol is the
number of interactions until the system enters a desired configuration and never leaves
the set of desired configurations again.
The \emph{stabilization time $T_S$} of an execution of a protocol is the
number of interactions until the system enters a desired \emph{stable} configuration,
meaning that starting from this configuration no
sequence of pairwise interactions can take the system outside of the set of desired configurations.
We always have $T_C \leq T_S$ 
for any execution of a protocol,
but the stabilization time may be strictly greater than the convergence time.
A population protocol is \emph{always correct} (or stable), if for each initial configuration
the computation reaches a 
correct stable configuration with probability $1$, or \emph{w.h.p.\ correct}, if there 
is a small (of the order of $n^{-\BigOmega{1}}$), but positive, probability that the system settles with an incorrect output or does not 
settle at all.
We say that a protocol converges (resp.\ stabilizes) in $T(n)$ time w.h.p.\ (resp.\ in expectation),
if $T_C$ (resp.\ $T_S$) is at most $T(n)$ \whp/ (resp.\ in expectation).

The second measure of efficiency of population protocols is the \emph{required number of states}. This is
a straightforward notion for simple protocols with constant number of states.
For more complex protocols, transition functions 
are usually described by pseudo-codes,
which refer to \emph{variables}.
The state space of a protocol which is defined in this way 
is the Cartesian product of the ranges of the variables.
If we consider all possible executions of a uniform protocol, then
the ranges of some variables may be very large in terms of $n$, potentially infinite.
We are, however, interested in bounds on the ranges of the variables (and thus bounds 
on the whole state space) that hold \whp/.

A population protocol is called \emph{uniform} if the same transition function
is used for all population sizes \cite{DBLP:journals/dc/AngluinAER07}. 
Uniformity of a protocol is thus a desired property since such protocols can be
applied without knowing the size of the
population in advance. 
The original model from 
\cite{DBLP:conf/podc/AngluinADFP04,DBLP:journals/dc/AngluinADFP06}
is uniform since the
number of states of an agent is constant.

\subsection{Related Work}
It is shown in \cite{DBLP:conf/podc/AngluinADFP04,DBLP:journals/dc/AngluinADFP06}
that with a constant number of states
all semilinear predicates (which include, e.g., the majority predicate),
can be computed in expected $\BigO{n^2 \log n}$ stabilization time.
Recently Kosowski and Uznanski \cite{DBLP:conf/podc/KosowskiU18}
have shown constant-state protocols 
for computing the semilinear predicates and for electing a leader, 
which \whp/ have $\BigO{n\polylog n}$ convergence time.
Achieving fast $\BigO{n\polylog n}$ stabilization time for the majority and leader election problems 
requires a growing state space,
as $\BigO{n^{2-\epsilon}}$ stabilization time is not possible with a constant number of states
\cite{DBLP:journals/dc/DotyS18,DBLP:conf/soda/AlistarhAEGR17}.

There are a number of
protocols for the majority problem and the leader election problem
with $\BigO{n\polylog n}$ stabilization time and $\BigO{\polylog n}$ states
\cite{DBLP:conf/podc/AlistarhGV15,DBLP:conf/soda/AlistarhAEGR17,DBLP:conf/podc/BilkeCER17,DBLP:conf/soda/AlistarhAG18,DBLP:conf/wdag/BerenbrinkEFKKR18},
but they all are non-uniform.
The transition functions of these protocols depend
on $n$ as they use values $\BigOmega{\log n}$ or $\BigOmega{\log\log n}$.
For example, each agent may be counting its interactions (which are uniform updates, not dependent on $n$)
and progress to the next phase of the computation when its counter reaches $\log n$
(a non-uniform update).

While the computational power of (uniform) constant-space population protocols 
is well understood by now, less is known about the power of uniform protocols which 
allow the state space to grow with the size of population. 
Doty et al.\ \cite{DBLP:conf/wdag/DotyEMST18} 
show a protocol for the exact population counting problem
which has $\BigO{n\log n \log\log n}$ stabilization time and uses $\BigO{n^{60}}$ states,
\whp/ and in expectation.
They also formalize the notion of uniform population protocols, modeling the agents 
as copies of the same Turing machine, but describe their protocols using pseudo-codes 
and take the size of the state space as the product of the ranges of the variables.

Doty and Eftekhari~\cite{DBLP:journals/corr/abs-1808-08913} consider the problem 
of estimating the size of population within a constant factor, which they view as 
the problem of computing $\log n \pm \BigO{1}$.
They show a protocol which \whp/ has $\BigO{n \log^2 n}$ convergence time 
and uses $\BigO{\log^7 n \log\log n}$ states.
Their protocol is not always correct and they left open the question 
to design a protocol which uses expected $\BigO{n \polylog n}$ interactions and $\BigO{\polylog n}$ states and computes 
$\log n \pm \BigO{1}$ with probability $1$.
An earlier work by Alistarh et al.\ \cite{DBLP:conf/soda/AlistarhAEGR17} includes 
a uniform protocol which uses $\BigO{n \log n}$ interactions and $\BigO{\log{n}}$ states in expectation to compute 
an integer which \whp/ is between $c_1 \log n$ and $c_2 \log n$, for some constants 
$0 < c_1 < 1 < c_2$. This can be viewed as approximating the population size within a polynomial factor.

\subsection{Our Results} \label{sec:model}
In this paper we present and analyze two uniform protocols, protocol \protocol{Approximate}
for approximating the population size within constant factors, and protocol \protocol{CountExact}
for computing the exact number of agents in the population. Unless stated otherwise we assume that all
agents have to output the population size or its approximation.
The theorems below summarize the performance of our protocols.

\begin{theorem} \label{thm:approx}
Protocol \protocol{Approximate} is uniform and outputs \whp/
either $\smash{\floor*{\log{n}}}$ or $\smash{\ceil*{\log{n}}}$.
\begin{enumerate}
\item \label{thm:approx-1}
It converges in at most $\operatorname{O}(n\log^2{n})$ interactions using
$\operatorname{O}(\log{n}\cdot\log\log{n})$ states, \whp/.

\item \label{thm:approx-2}
A variant of the protocol stabilizes in $\operatorname{O}(n\log^2{n})$ 
interactions using at most $\operatorname{O}(\log^2{n} \cdot \log\log{n})$ states, \whp/.
\item \label{thm:approx-3}
Stabilization can also be achieved \whp/ in $\operatorname{O}(n\log^2{n})$ 
interactions using $\operatorname{O}(\log {n} \cdot \log\log{n})$ states,
if not all but only $n - \log{n}$ agents need to output the result.
\end{enumerate}
\end{theorem}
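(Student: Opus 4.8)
The plan is to decompose protocol \protocol{Approximate} into three interacting components and to analyze each before combining the guarantees by a union bound over a small collection of good events. The three components are: a preliminary stage that elects a small \emph{junta} of agents to serve as a common source of synchronization; a uniform phase clock, driven by the junta, that partitions the execution into phases each lasting $\BigTheta{n\log n}$ interactions \whp/; and an estimation process in which, during every phase, each agent performs one independent fair coin flip and advances a level counter, so that the way the agents are distributed over levels encodes $\log n$. Since the level process must reach level $\approx\log_2 n$, advancing roughly one level per phase, there are $\BigTheta{\log n}$ phases in total.

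First I would analyze the junta election and the phase clock. The two quantitative facts I need are that the underlying spreading process (a one-way epidemic, or the junta-election process itself) completes in $\BigTheta{\log n}$ parallel rounds, i.e.\ $\BigTheta{n\log n}$ interactions, with good concentration; and that a phase of length $\BigTheta{n\log n}$ is simultaneously long enough that every agent is ticked at least once \whp/ (so that the phase boundaries are globally consistent) and short enough that the total over all $\BigTheta{\log n}$ phases stays $\BigO{n\log^2 n}$. The first genuinely technical point here is to keep the clock \emph{uniform}: it must calibrate its own length from locally observable quantities, such as counts of an agent's own interactions or of the epidemic's progress, rather than from a hard-coded threshold of order $\log n$.

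The heart of the proof, and the step I expect to be the main obstacle, is to show that the estimation process outputs exactly $\floor*{\log n}$ or $\ceil*{\log n}$ \whp/, and not merely $\log n \pm \BigO{1}$. The natural statistics---the largest level reached by any agent, or the first phase in which no agent survives---concentrate only up to an additive constant, which is too weak. Instead I would track, for each level $j$, the number $N_j$ of agents that have reached level $j$; since $N_j$ is distributed as $\mathrm{Bin}(n,2^{-j})$ its mean halves with each increment of $j$, so by locating the level at which $N_j$ crosses a threshold of order $\log n$ one pins down $\log_2 n$ with additive error $\LittleO{1}$: at that level the relative fluctuation of $N_j$ is $\BigO{1/\sqrt{\log n}} \ll 1$ by a Chernoff bound. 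The delicate parts are implementing this threshold-crossing test inside the population model without an explicit global counter, and correctly accounting for the $\log_2\log n$ offset so that the recovered quantity rounds to the intended pair of consecutive integers.

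Finally I would settle the three parts. \cref{thm:approx-1} then follows by combining the bounds: $\BigTheta{\log n}$ phases of $\BigTheta{n\log n}$ interactions give the $\BigO{n\log^2 n}$ convergence time, while the level and phase counters range over $\BigO{\log n}$ values and the intra-phase clock over $\BigO{\log\log n}$ values, for $\BigO{\log n\cdot\log\log n}$ states in total. For \cref{thm:approx-2}, turning convergence into \emph{stabilization} requires a detection layer guaranteeing that no later interaction can corrupt a settled output; adding such a re-broadcast-and-cross-check layer multiplies the state space by a further $\BigO{\log n}$ factor and yields $\BigO{\log^2 n\cdot\log\log n}$ states. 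For \cref{thm:approx-3}, I would observe that this extra factor is only needed to certify the $\BigO{\log n}$ distinguished (junta/helper) agents; if we are content to have $n-\log n$ agents output the result, the remaining agents stabilize simply by copying the junta's certified value, so the leaner $\BigO{\log n\cdot\log\log n}$-state construction already stabilizes. Throughout, correctness \whp/ comes from choosing the hidden constants so that each good event fails with probability $n^{-\BigOmega{1}}$ and taking a union bound over the $\BigO{\log n}$ phases.
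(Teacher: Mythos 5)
Your route is genuinely different from the paper's, so let me first say what the paper does: a leader injects exactly $2^r$ tokens in round $r$, the non-leaders run a powers-of-two load-balancing rule, and the search stops at the first $r$ for which some agent ends the round with load at least $2$; this stopping rule is a purely local test (plus a max-broadcast), and the analysis (\cref{lem:powers-of-two-load-balancing,lem:search}) pins the stopping value to $3/4\cdot n < 2^{\K u}\leq 2^{\ceil*{\log{n}}}$, which is what forces the output into $\set{\floor*{\log{n}}, \ceil*{\log{n}}}$. Your proposal replaces this by a coin-flip level process and a threshold-crossing test on $N_j$, the number of agents at level $j$. The step you flag as ``delicate'' is in fact the whole problem: deciding whether $N_j$ lies above or below a threshold of order $\log n$ is itself an instance of (approximate) counting, and you give no population-protocol mechanism for it --- no leader-driven token injection, no balancing gadget, nothing an agent could evaluate from pairwise interactions. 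As it stands the argument is circular, and this is exactly the difficulty the paper's injection-and-balancing design exists to avoid: ``does any agent have load $\geq 2$ after balancing $2^r$ tokens'' is checkable by a single transition rule and one broadcast.

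Second, the concentration claim carrying your precision guarantee is quantitatively wrong at the \whp/ level. With $N_j\sim\mathrm{Bin}(n,2^{-j})$ and $\mathbb{E}[N_j]=\BigTheta{\log n}$ at the crossing level, a Chernoff bound with failure probability $n^{-\BigOmega{1}}$ requires a relative deviation $\delta$ satisfying $\delta^2\cdot\BigTheta{\log n}=\BigOmega{\log n}$, i.e.\ $\delta=\BigOmega{1}$; the $\BigO{1/\sqrt{\log n}}$ fluctuation you invoke is the standard-deviation scale and holds only with constant probability. Hence \whp/ your recovered estimate of $\log n$ carries additive error $\BigTheta{1}$ --- precisely the ``too weak'' accuracy you set out to beat. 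Repairing this forces the threshold up to $\LittleOmega{\log n}$ (say $\log^3 n$), which makes the missing counting mechanism even harder and already strains the $\BigO{\log{n}\cdot\log\log{n}}$ state budget of \cref{thm:approx-1}. The claims for \cref{thm:approx-2,thm:approx-3} then inherit these gaps; for the record, in the paper the extra $\log n$ state factor in \cref{thm:approx-2} comes not from a cross-check layer but from the slow backup protocol, whose up to $\log n$ token-holding agents need $\BigO{\log^2{n}}$ states if they must also learn the broadcast result, and \cref{thm:approx-3} is obtained by exempting exactly those agents from outputting.
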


\begin{theorem}[restate=restatethmcounting,label=thm:counting]
The protocol \protocol{CountExact} is uniform and outputs the exact population
size $n$. It stabilizes in $\BigO{n\log{n}}$ interactions and uses
$\BigOTilde{n}$ states, \whp/.\footnote{%
We define $\BigOTilde{f(n)}=\BigO{f(n)\cdot\log^{\BigO{1}}(f(n))}$.%
}
\end{theorem}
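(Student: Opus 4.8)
The plan is to split the analysis of \protocol{CountExact} into a short bootstrapping phase computing a crude estimate of $n$ and a main phase refining it to the exact value, and to bound correctness, running time, and state complexity of each phase separately, combining the $n^{-\BigOmega{1}}$ failure probabilities by a union bound over the $\BigO{1}$ phases. Throughout I would condition on a good event $\mathcal{E}$ gathering all the \whp/ guarantees (accurate estimate, well-behaved phase clock, sufficient scheduler coverage). One subtlety to flag immediately: the bootstrap cannot be the full \protocol{Approximate} protocol, which by \cref{thm:approx} already costs $\BigO{n\log^2 n}$ interactions and would blow the budget; instead I would use a coarser estimate of $\log n$ \emph{within a constant factor}, obtainable in $\BigO{n\log n}$ interactions with $\BigO{\log n}$ states as in the protocol of Alistarh et al.\ cited in the related work. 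This produces a value $\hat\ell$ with $2^{\hat\ell}=\BigTheta{n}$ computed by the protocol itself, which is the single point on which uniformity rests: every later rule is parametrized only by the locally stored $\hat\ell$, never by $n$.

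Next I would use $\hat\ell$ to drive a phase clock partitioning the main phase into $\BigO{\log n}$ synchronized epochs of length $\BigTheta{n}$, for a total of $\BigO{n\log n}$ interactions. The statement that all agents agree on the current epoch up to $\pm1$ \whp/ reduces to a Chernoff bound on the geometrically distributed inter-tick times, and this is precisely where the $\hat\ell$-calibration is indispensable. Correctness I would then reduce to a single conservation invariant: in every reachable configuration the per-agent count values sum to exactly $n$ (each agent contributes $1$ initially), an invariant preserved by every transition because the rules only ever move counts between agents. Given the invariant, it remains to prove an aggregation lemma --- that \whp/ the exact total becomes known within the allotted epochs --- after which an epidemic broadcast carries the value $n$ to all agents in a further $\BigO{n\log n}$ interactions \whp/. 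Stabilization, rather than mere convergence, then follows by checking that once every agent outputs $n$ the transition function never touches the output coordinate, so the configuration lies in the stable set. The state bound is immediate from the invariant: each count is at most $n$ and the clock value, epoch index, and $\hat\ell$ each have range $\BigO{\polylog n}$, giving a product state space of $n\cdot\polylog n=\BigOTilde n$; the \whp/ qualifier is discharged by bounding, via concentration of the scheduler, the largest clock value and the largest count ever attained.

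The hard part will be the aggregation lemma, and the obstacle it must overcome is intrinsic rather than incidental. The naive scheme of repeatedly merging two agents that still hold counts is $\BigTheta{n^2}$: when only $k$ such agents remain, two of them collide once per $\BigTheta{n^2/k^2}$ interactions, so the tail at small $k$ alone forces $\BigOmega{n^2}$ interactions, and no coalescence-to-one can meet an $\BigO{n\log n}$ budget. The proof must therefore show that the estimate-driven synchronization lets the protocol sidestep pairwise merging altogether --- for instance by having the bootstrapped leader broadcast a candidate value derived from $\hat\ell$ and then correcting it using only aggregate, broadcastable signals (how many agents see their contribution unaccounted for), each correction round costing one epidemic and hence $\BigO{n\log n}$ interactions. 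Establishing that a constant number of such rounds suffices to pin down $n$ \emph{exactly}, that each round preserves the conservation invariant, and that the auxiliary signals fit within the $\BigOTilde n$ state budget is, I expect, the central technical difficulty of the theorem, and the place where the lower-order $\polylog$ factors in $\BigOTilde n$ are actually spent.
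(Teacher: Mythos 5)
There is a genuine gap, and you have located it yourself: the ``aggregation lemma'' that you call the central technical difficulty is never proved, and the mechanism you sketch for it does not work. Having each agent contribute one token, preserving the invariant that counts sum to $n$, and then trying to learn that sum is exactly the framing that forces either $\BigTheta{n^2}$ pairwise coalescence (as you correctly compute) or some substitute; but your substitute --- the leader broadcasts a candidate derived from $\hat\ell$ and then corrects it using ``how many agents see their contribution unaccounted for'' --- is circular, because determining how many agents are unaccounted for is itself an exact-counting problem of the same difficulty. A constant-factor estimate $2^{\hat\ell}=\BigTheta{n}$ leaves $\BigTheta{n}$ candidate values for $n$, so a constant number of correction rounds each carrying only broadcastable (i.e., max/OR-type) signals cannot pin down $n$ exactly; no refinement of this scheme is given, so the proposal is not a proof. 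A secondary problem: your epochs of length $\BigTheta{n}$ are too short for either one-way epidemics or load balancing, both of which need $\BigTheta{n\log n}$ interactions; and your stabilization argument (``the transition function never touches the output coordinate'') ignores the low-probability failure branches (leader election or clock failure), which the paper must handle with explicit error detection plus an always-correct backup protocol.

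The paper's proof sidesteps the aggregation problem entirely by abandoning the conservation-of-$n$-tokens framing. After leader election via \protocol{FastLeaderElection} ($\BigO{n\log n}$ interactions, $\BigOTilde{n}$ states), the junta level gives a crude estimate $n^{\eta}=2^{2^{\Level{u}-8}}$, and the \protocol{ApproximationStage} amplifies the total load by a factor $n^{\eta}$ per phase, so only $\BigO{1/\eta}=\BigO{1}$ phases of $\BigTheta{n\log n}$ interactions are needed to reach total load $\geq 2n$ and extract $k=\log n\pm3$ (this replaces your appeal to Alistarh et al.). Then, in the \protocol{RefinementStage}, the leader manufactures a load of exactly $M=2^{8}\cdot2^{2k}\geq 4n^{2}$ tokens --- a quantity every agent knows \emph{exactly}, since $k$ is broadcast and $M$ is determined by the protocol's own parameters rather than measured. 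Classical load balancing then gives every agent $\L{v}=M/n+r$ with $\abs{r}\leq1.5$, and because $M\geq4n^{2}$ the perturbation of $M/\L{v}$ caused by $r$ is less than $1/2$, so $\omega(v)=\round{M/\L{v}}=n$ exactly, for every agent simultaneously. In other words, exactness is bought not by coalescing information into one agent but by making the known total load quadratically large, so that the $\BigO{1}$ balancing discrepancy falls below the rounding threshold; this is the idea your proposal is missing, and it is also where the $\BigOTilde{n}$ state bound comes from, since each agent's load is then $\BigTheta{M/n}=\BigTheta{n}$.
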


Our protocols improve considerably the time and space bounds of the 
previous work \cite{DBLP:conf/wdag/DotyEMST18} and 
\cite{DBLP:journals/corr/abs-1808-08913}.
Moreover, our approximate protocol is stable, answering the open question posed 
in~\cite{DBLP:journals/corr/abs-1808-08913}, and it calculates a tighter approximation  
converging to $\floor*{\log{n}}$ or $\ceil*{\log{n}}$ instead of $\log n \pm 4.7$ shown in \cite{DBLP:journals/corr/abs-1808-08913}.
The stabilization time of our exact counting protocol is asymptotically optimal  
and its number of states is only a $\polylog n$ factor away from the lower bound $n$.
We note that in our protocol for exact counting the output value of an agent is a function of 
the state of this agent, and $n$ is not kept explicitly in the agent's state.
This is consistent with the approach in \cite{DBLP:conf/wdag/DotyEMST18}.

Our algorithms are based on leader election followed by load balancing phases. 
A load balancing phase starts with the total load of $M$ tokens in the system
and the goal is to relate $M$ to $n$ (the size of the population). 
If this load balancing phase completes with some agents having zero load, then
\whp/ $M \leq cn$, for a fixed constant $c > 1$.
On the other hand, if all agents end up with positive load, then we must have $M \geq n$.
To achieve good time and space bounds, we have to carefully control the load balancing phases
and integrate them with the process of leader election.
To achieve stability both of our protocols use an error detection routine
and the leader initializes a
process testing if the calculated answer is correct or not.

In \cref{sec:auxiliary-protocols} we discuss the auxiliary protocols which we use 
in our main protocols.
We outline our approximate and exact protocols in \cref{sec:approx,sec:counting}, 
respectively.

\section{Auxiliary Protocols} \label{sec:auxiliary-protocols}
In this section we define auxiliary protocols that we will apply in our
protocols: one-way epidemics, the junta process, leader election, and phase
clocks.

\paragraph{One-Way Epidemics}
The goal of one-way epidemics is to spread an information to all members of the population.
The state space of the protocol is $\set{0, x}$ for some $x > 0$. Initially at least one
agent has the value $x$, which is then spread to all other agents. The
transitions are formally defined as
$\delta(u, v) = (\max\set{u, v}, v)$.
We will refer to one-way epidemics also as \emph{broadcast}.
A natural extension is \emph{maximum broadcast}, where agents do not only spread one
possible value $x$, but instead each agent starts with its own value from the integer
interval $[0, x]$. For maximum broadcast, an agent always adopts the maximum,
which is covered by the transition rule above as well.
The result on one-way epidemics carries over immediately to
maximum broadcast. The following result is well-known, see for example \cite{DBLP:journals/dc/AngluinAE08a}.
\begin{lemma} \label{lem:broadcast}
Let $\T_{BC}$ be the number of interactions required to complete (maximum)
broadcast. \Whp/, $\T_{BC} = \BigO{n \log{n}}$.
\end{lemma}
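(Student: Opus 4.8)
The plan is to reduce the general statement about maximum broadcast to a single clean Markov chain, namely the spread of one value (an \emph{epidemic}), and then to control that chain first in expectation and then \whp/. For the reduction, observe that under the rule $\delta(u,v)=(\max\set{u,v},v)$ the \emph{set of agents currently holding the global maximum value} is monotonically non-decreasing: once an agent holds the maximum it never changes, and an uninfected agent acquires it exactly when it is selected as initiator and meets a responder that already holds it. Hence the set of agents holding the maximum evolves identically to a one-way epidemic seeded at the initial maximum-holders. Since at least one agent holds the maximum initially, this is stochastically dominated by the worst case of a single source, so it suffices to bound the completion time of a single-source one-way epidemic with $n$ agents.

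Let $I_t$ denote the number of infected agents, with $I_0=1$. The scheduler picks an ordered pair of distinct agents uniformly among the $n(n-1)$ such pairs, and a new infection occurs exactly when the initiator is one of the $n-i$ uninfected agents and the responder one of the $i$ infected ones. Thus, conditioned on $I_t=i$, the number of infected agents increases with probability $p_i=\frac{i(n-i)}{n(n-1)}$, and $\T_{BC}=\sum_{i=1}^{n-1}G_i$, where the $G_i$ are independent and $G_i$ is geometric with parameter $p_i$. Using the partial-fraction identity $\frac{1}{i(n-i)}=\frac1n\!\left(\frac1i+\frac1{n-i}\right)$, the expected completion time is $\sum_{i=1}^{n-1}1/p_i=2(n-1)H_{n-1}=\BigTheta{n\log n}$, which already gives the claimed order in expectation.

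For the \whp/ bound I would exploit the symmetry of $p_i$ under $i\mapsto n-i$: the substitution $j=n-i$ shows that the \dq{second half} (growth from $n/2$ to $n-1$) has exactly the same collection of parameters as the \dq{first half} (growth from $1$ to $n/2$), so the two halves are identically distributed and it suffices to bound one of them, say $T_1=\sum_{i=1}^{n/2}G_i$. On this range $n-i\ge n/2$, so $p_i\ge \frac{i}{2n}$, and $T_1$ is stochastically dominated by a sum $\sum_{i=1}^{n/2}G_i'$ of independent geometrics with parameters $i/(2n)$, whose expectation $\mu=2n\,H_{n/2}=\BigTheta{n\log n}$. The crucial feature is that the smallest success probability here is $p_*=\BigTheta{1/n}$, so $p_*\mu=\BigTheta{\log n}$; feeding this into a standard tail bound for sums of independent geometric variables (of the form $\Probability{T_1\ge\lambda\mu}\le e^{-p_*\mu(\lambda-1-\ln\lambda)}$ for $\lambda\ge1$) and choosing $\lambda$ a sufficiently large constant yields $\Probability{T_1\ge\lambda\mu}\le n^{-\BigOmega{1}}$. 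A union bound over the two halves then gives $\T_{BC}=\BigO{n\log n}$ \whp/.

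The main obstacle is the concentration step rather than the expectation: the slowest steps are those near $i=1$ and $i=n-1$, where $p_i=\BigTheta{1/n}$ is smallest and the variance of the individual waiting times is largest, so a naive Markov- or Chebyshev-type argument is too weak. The point that makes the \whp/ claim go through is that, although these extreme steps dominate the fluctuations, there are only a constant number of them at each scale and the product $p_*\mu=\BigTheta{\log n}$ grows, which is exactly the quantity controlling the exponent in the geometric tail bound. I would therefore either cite the known tail inequality for sums of independent geometric (equivalently, exponential) random variables, or, if a self-contained argument is preferred, replace it by a telescoping \dq{doubling} analysis showing that the infected set at least doubles within $\BigO{n}$ interactions (with the requisite probability) over $\BigO{\log n}$ doublings, again invoking the symmetry to handle the final approach to full infection.
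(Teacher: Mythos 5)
Your proposal is correct, but there is nothing in the paper to compare it against: the paper does not prove this lemma at all, stating it as a well-known result with a pointer to \cite{DBLP:journals/dc/AngluinAE08a}. Your argument is a legitimate, self-contained proof of that classical fact. The route you take---identifying the set of maximum-holders with a one-way epidemic via a monotone coupling, decomposing the completion time as a sum of independent geometric variables with parameters $p_i = \frac{i(n-i)}{n(n-1)}$, using the symmetry $p_i = p_{n-i}$ to reduce to the range $i \le n/2$ where $p_i \ge \frac{i}{2n}$, and then invoking a Janson-type tail bound for sums of independent geometrics, whose exponent is governed by $p_*\mu = \BigTheta{\log n}$---is the standard modern proof and is sound; the key insight you correctly isolate is that a second-moment argument fails and one needs the exponential tail bound, with the growth of $p_*\mu$ carrying the whole estimate. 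Two trivial points to tidy up: the two \dq{halves} are not exactly identically distributed (one has $n/2$ summands, the other $n/2-1$), but the second is stochastically dominated by the first, which suffices for the union bound; and the reduction from many initial maximum-holders to a single source should be stated as a one-line coupling (running both processes on the same interaction sequence preserves containment of infected sets). What the paper's citation buys is brevity and a two-sided $\BigTheta{n\log n}$ bound that the reference establishes; what your argument buys is a self-contained derivation that needs nothing beyond a standard concentration inequality for geometric sums, which would be a reasonable appendix had the authors chosen to include one.
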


\paragraph{Junta Process}\label{x-sec:junta}
The goal of the \emph{junta process} \cite{GS18, DBLP:journals/corr/abs-1805-04586} is to mark
$\BigTheta{n^\epsilon}$ agents -- the \emph{junta}. The state of each agent $v$
in this process is formed by a triplet $(\Level v, \Active v, \Junta v) \in \mathbb{N}_0
\times \set{0,1}^{2}$ initially set to $(0, 1, 1)$. The idea of the protocol is as follows. If an active agent $v$ interacts with an active agent on the same level it increases its level, otherwise it
sets $\Active v$ to $0$. Whenever $v$ interacts with an agent on a higher level it sets $\Junta v$ to $0$.  Inactive agents adopt the level of their communication partner if that is higher. 
The protocol stabilizes when all agents are inactive.
The \emph{junta} is formed by all agents $v$ that reached the maximum level
and have their $\Junta v$ bit set to $1$. The following lemma 
is shown in \cite{DBLP:journals/corr/abs-1805-04586}, here it is adapted to our setting.

\begin{lemma} (\cite{DBLP:journals/corr/abs-1805-04586}) \label{lem:junta}
Let $\Level* = \max\set{\Level v}$ be the maximal level reached by the
junta process. 
All agents become inactive within $\BigO{n\log{n}}$ interactions,
$\log\log{n} - 4 \leq \Level* \leq \log\log{n} + 8$, and
the number of agents on the maximal level is $\BigO{\sqrt{n}\cdot\log{n}}$, \whp/.
\end{lemma}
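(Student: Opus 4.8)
The plan is to track, for each level $\ell \geq 0$, the random variable $N_\ell$ counting the agents that ever reach level $\ell$ (equivalently, that are active at level $\ell$ at some point), with $N_0 = n$, and to show that this sequence decays doubly-exponentially. The central claim is the recursion
\[
N_{\ell+1} = \BigTheta{N_\ell^2 / n}, \quad \text{valid as long as } N_\ell \text{ stays polynomially large,}
\]
which I expect to hold \whp/ at every level. To see informally why the square appears, fix a level $\ell$ and consider the \emph{decisive} interaction of each active level-$\ell$ agent, i.e.\ its first interaction that either promotes it (meeting another active level-$\ell$ agent, so that both advance) or deactivates it (meeting anything else). Since the pool of active level-$\ell$ agents has size $N_\ell \ll n$, the probability that the partner in a decisive interaction is itself an active level-$\ell$ agent is $\BigTheta{N_\ell / n}$, so summing over the $\BigTheta{N_\ell}$ agents gives $\BigTheta{N_\ell^2 / n}$ promotions. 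Solving $N_{\ell+1} \approx N_\ell^2/n$ with $N_0=n$ yields $N_\ell \approx n\cdot 2^{-2^\ell}$, which falls below $1$ precisely around $\ell = \log\log n$; this is the common source of both $\Level* = \log\log n \pm \BigO{1}$ and, at $\ell = \Level* = \log\log n - \BigTheta{1}$, the bound $N_{\Level*} = \BigO{\sqrt{n}\cdot\log n}$.

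Making the recursion rigorous is the technical heart. First I would formalize the decisive-interaction decomposition by revealing the interactions that involve active level-$\ell$ agents in their natural order and coupling the resulting promotion/deactivation events to a sequence of biased coin flips whose bias tracks the shrinking pool size. For the \emph{upper} bound on $N_{\ell+1}$ a first-moment/Chernoff argument suffices: the expected number of promotions is at most $N_\ell^2/n$ up to constants, and a union bound over the $\BigO{\log\log n}$ levels then forces $N_\ell = 0$ by level $\log\log n + 8$, giving $\Level* \leq \log\log n + 8$ \whp/. For the \emph{lower} bound I would instead show concentration of $N_{\ell+1}$ from below — that the pool does not shrink faster than $\BigTheta{N_\ell^2/n}$ — via a second-moment or bounded-difference (Azuma) estimate; this keeps $N_\ell$ polynomially large up to level $\log\log n - 4$, so that \whp/ at least one agent advances that far and $\Level* \geq \log\log n - 4$. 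The count at the top level then follows by evaluating the same two-sided bounds at $\ell = \Level*$.

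It remains to bound the running time. An active agent is deactivated at its first interaction whose partner is not an active agent on its own level; since such partners are rare once the pool is small, an active agent essentially deactivates at its very next interaction, and by a coupon-collector argument every agent participates in an interaction — and hence settles its active/inactive status — within $\BigO{n\log n}$ interactions. After the competition ends, the maximal level still has to reach the inactive agents, which adopt the highest level they observe; this is a maximum broadcast and completes in $\BigO{n\log n}$ interactions by \cref{lem:broadcast}, dominating the cost and giving the claimed time bound.

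The main obstacle I anticipate is the two-sided concentration of the recursion in the regime where $N_\ell$ is already small, i.e.\ at the top $\BigO{1}$ levels near $\log\log n$. There the number of promotions has constant-order expectation, so Chernoff-type bounds no longer pin it down, and the precise additive constants in $\Level* = \log\log n \pm \BigO{1}$ together with the $\BigO{\sqrt{n}\cdot\log n}$ count must be extracted by a more careful moment computation that controls the dependencies between which agents advance.
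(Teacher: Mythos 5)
You should know at the outset that the paper never proves this lemma: it is imported from \cite{DBLP:journals/corr/abs-1805-04586} (the text says the statement is ``adapted to our setting''), and no proof appears in the body or the appendix. So there is no in-paper proof to compare against, and your proposal has to be judged on its own merits. Its overall architecture --- the per-level counts $N_i$, the doubly-exponential decay recursion $N_{i+1} = \BigTheta{N_i^2/n}$, a first-moment upper bound plus a concentration-based lower bound, and a coupon-collector argument for the $\BigO{n\log{n}}$ time bound --- is indeed the standard way this level process is analyzed in the cited line of work.

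There is, however, a genuine gap at the center of the plan. Your justification of the recursion takes the probability that a decisive interaction of a level-$i$ agent is a promotion to be $\BigTheta{N_i/n}$, where $N_i$ counts the agents that \emph{ever} reach level $i$. But a promotion requires the partner to be active at level $i$ \emph{at that moment}, and an agent is active at level $i$ only from its arrival there until its very next interaction --- an interval of expected length roughly $n/2$ interactions. Hence the relevant quantity is the \emph{instantaneous} pool size, for which $N_i$ is only an upper bound. This asymmetry matters: the upper-bound direction of the recursion (and with it $\Level* \leq \log\log{n} + \BigO{1}$) survives, since the pool never exceeds $N_i$; but the lower-bound direction --- which you need both for $\Level* \geq \log\log{n} - 4$ and for the $\BigO{\sqrt{n}\cdot\log{n}}$ bound at the top level --- is simply false without an additional ingredient: one must prove that arrivals at each level are temporally concentrated, i.e., that they fall within a common window of $\BigO{n}$ interactions, so that the instantaneous pool really reaches size $\BigOmega{N_i}$. (In the extreme where arrivals occur one at a time, spread out, no promotion ever happens no matter how large $N_i$ is.) The induction over levels therefore has to carry a temporal-concentration invariant alongside the counting invariant; this, rather than the constant-order expectations at the top levels that you flag as the main obstacle, is the real technical heart. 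Two smaller repairs: a recursion $N_{i+1} = \BigTheta{N_i^2/n}$ with $N_0 = n$ does not decay at all unless the implied constant is strictly below $1$, so that constant must be made explicit (it is also the source of the specific additive constants $-4$ and $+8$); and an active agent does \emph{not} settle its status at its first interaction --- a promoted agent stays active --- so the time bound should instead argue that, given $\Level* \leq \log\log{n} + 8$ \whp/, every agent becomes inactive within $\log\log{n} + 9$ of its own interactions, and then apply the coupon-collector bound.
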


\paragraph{Phase Clocks} \label{x-sec:phase-clocks}
The existence of a non-empty junta of size in $[1, n^{\epsilon}]$ allows the agents to synchronize
themselves via so-called \emph{phase clocks} \cite{DBLP:journals/dc/AngluinAE08a,GS18}.
The phase clocks allow all agents to divide the time in phases of $\BigTheta{n\log{n}}$
interactions each. All agents $v$ have a
state $\Clock v$ in $\set{0, \dots, m-1}$ where $m$ is a constant. Intuitively, these clock states can be seen as the hours on a clock face. 
The basic idea in every interaction is that the
agents always adopt the larger clock state w.r.t.\ the circular order modulo $m$.
Additionally, in order to keep the clock running, the members of the junta proceed one
additional step when they interact with another agent on the same clock state.
An agent $v$ enters a new phase in interaction $t$
if $\Clock v$
crosses the boundary between $m-1$ and $0$.
In that case we say the phase clock \emph{ticks}.

For easy access to the phase clocks, we equip each agent $v$ with a variable
$\Phase v$ of constant size that counts the current phase of an
agent modulo some constant. 
Additionally, each agent
$v$ has a flag\footnote{Note that for all flags we use $\tfalse$ (resp.\ $\ttrue$)
and $0$ (resp.\ $1$) interchangeably.} $\FirstTick v$.
This flag is set to $1$ whenever the $\Phase v$
counter is incremented, and it is set to $0$ otherwise.

\def\Din#1{D^{\text{\normalfont start}}_{#1}}
\def\Dout#1{D^{\text{\normalfont end}}_{#1}}
Let $D_{i} = [\Din{i}, \Dout{i}]$ be the interval of interactions in
Phase $i$ such that the last agent enters Phase $i$ in interaction $\Din{i}$
and the first agent leaves Phase $i$ in interaction $\Dout{i}+1$.
\begin{lemma}[\cite{GS18}] \label{lem:phase-clocks}
For any constant $c\geq 0$ we can construct a phase clock using $m = m(c) = \BigO{1}$
states such that \whp/ for all phases $1\leq i \leq \poly(n)$ 
we have $c n\log{n} \leq \Dout{i} - \Din{i} \leq c n\log{n} + \BigTheta{n\log{n}}$.
\end{lemma}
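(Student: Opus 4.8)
The construction is the junta-driven phase clock of \cite{GS18}: every agent adopts the larger clock state in the cyclic order modulo $m$, and a junta member advances one extra step whenever it meets an agent on its own clock state. The only freedom we exploit is the number of hours: we take $m=m(c)$ to be a sufficiently large constant depending on $c$. By \cref{lem:junta} the junta is non-empty and has size $\BigO{\sqrt n\log n}$ \whp/, which is exactly the regime the phase clock needs -- large enough to drive the clock, small enough to keep it synchronized with a constant number of hours. For the bookkeeping, write $a_i$ for the first interaction at which \emph{some} agent enters Phase $i$ and $b_i=\Din{i}$ for the interaction at which the \emph{last} agent enters Phase $i$; then $\Dout{i}+1=a_{i+1}$, so $\Dout{i}-\Din{i}=a_{i+1}-1-b_i$. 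The plan is to bound the length $a_{i+1}-a_i$ of a phase at the frontier from above and below, to bound the synchronization spread $b_i-a_i$, and finally to take a union bound over all $1\le i\le\poly(n)$.

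For the upper bound on the window (the clock never gets stuck) I would argue that whenever a junta agent first reaches a fresh hour $h$, \cref{lem:broadcast} spreads $h$ to all agents within $\BigO{n\log n}$ interactions \whp/, after which the junta advances to $h+1$, and so on. Hence the frontier agent completes all $m$ hours of a phase within $m\cdot\BigO{n\log n}$ interactions, giving $a_{i+1}-a_i\le m\cdot\BigO{n\log n}$ and, since $b_i\ge a_i$, $\Dout{i}-\Din{i}\le a_{i+1}-1-a_i\le cn\log n+\BigTheta{n\log n}$ once $m$ is fixed to $m(c)$. The point is that each hour costs $\BigTheta{n\log n}$ and $m$ is an integer, so the phase length moves in steps of $\BigTheta{n\log n}$; choosing the smallest $m$ whose phase length exceeds $cn\log n$ lands it in $[cn\log n,\,cn\log n+\BigTheta{n\log n}]$, and the $\BigTheta{n\log n}$ slack in the statement is precisely one such step.

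For the lower bound on the window (the clock never races ahead) I would use synchronization together with an epidemic lower bound. Because $m$ is a constant, the cyclic-max rule stays well defined only if no two agents are ever $\Omega(m)$ hours apart; this bounded spread holds \whp/ and gives $b_i-a_i=\BigO{n\log n}$. Moreover, the frontier cannot produce a fresh hour until the epidemic carrying the previous hour has reached a constant fraction of the agents, which by the lower bound on one-way epidemic spread needs $\BigOmega{n\log n}$ interactions; otherwise the frontier would outrun and lap the population within $\BigO{1}$ hours, contradicting synchronization. Thus each hour costs $\BigOmega{n\log n}$, so $a_{i+1}-a_i\ge m\cdot\BigOmega{n\log n}\ge cn\log n+(b_i-a_i)+1$ for $m=m(c)$ large enough, and therefore $\Dout{i}-\Din{i}=a_{i+1}-1-b_i\ge cn\log n$.

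The main obstacle is not any single phase but making all of the above hold \emph{simultaneously} for $1\le i\le\poly(n)$. Each of the per-phase events (broadcast completion, the junta's advancement waiting times, and the bounded-spread synchronization invariant) must fail with probability $n^{-\omega(1)}$ rather than merely $n^{-\BigOmega{1}}$, so the drift estimates on the junta's geometric waiting times and the epidemic tail bounds must be quantified with exponential concentration strong enough to survive a polynomial-size union bound, in the style of \cite{DBLP:journals/dc/AngluinAE08a,GS18}. Maintaining the synchronization window across all phases is the technical heart of \cite{GS18}; since replacing the constant number of hours by the constant $m(c)$ only rescales the constants in that analysis, I would complete the proof by invoking their result after this verification.
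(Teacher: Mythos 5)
Your proposal should first be measured against what the paper actually does: the paper gives \emph{no} proof of this lemma at all --- it is imported wholesale from \cite{GS18} (the lemma is even labelled with that citation), exactly like \cref{lem:broadcast,lem:junta,lem:leader-election}. Since your write-up also ends by invoking the analysis of \cite{GS18}, at the level of what is actually established you and the paper agree: both rest on the external result. The issue is that the sketch you wrap around that citation contains a step that is genuinely wrong, so it cannot be read as even an outline of a self-contained proof.

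The flawed step is your lower bound on the length of an hour. You claim the frontier ``cannot produce a fresh hour until the epidemic carrying the previous hour has reached a constant fraction of the agents,'' and hence that each hour costs $\BigOmega{n\log n}$ interactions \whp/. Neither half is true. A junta agent at the frontier advances the moment it meets \emph{one} agent sharing its hour, not a constant fraction of them; and the resulting per-hour lower bound fails with probability that is only polynomially small with a \emph{small, non-tunable} exponent. Concretely, $t$ interactions after a fresh hour is created, roughly $e^{2t/n}$ agents carry it, so the per-interaction chance that a frontier junta agent meets such an agent is about $2e^{2t/n}/n^2$; integrating, the frontier advances within $\beta n\log n$ interactions with probability on the order of $n^{-(1-2\beta)}$. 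An exponent below $1$ cannot survive a union bound over $\poly(n)$ phases, so ``each hour costs $\BigOmega{n\log n}$ \whp/'' is simply false in the form your argument needs. The actual argument in \cite{DBLP:journals/dc/AngluinAE08a,GS18} does not bound single hours from below; it bounds whole phases: a phase of $m$ hours is short only if many of its hours are short \emph{simultaneously}, which happens with probability $n^{-\BigOmega{m}}$, and the exponent is then driven past the degree of the polynomial by choosing $m=m(c)$ large. This is the second, essential role of $m$ that your sketch misses --- in your account $m$ only stretches the expected phase length past $cn\log n$, whereas in truth it is also the sole source of the concentration needed for the union bound. (A minor related point: you do not need per-phase failure probability $n^{-\omega(1)}$; failure probability $n^{-C}$ for a sufficiently large constant $C$ suffices, and that is exactly what the aggregation over $m$ hours provides.)
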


\paragraph{Leader Election}\label{x-sec:leader-election}
In \cite{GS18} the authors present a stable and uniform protocol to perform
leader election called \protocol{leader\_elect}. It runs in $\BigO{\log{n}}$
phases of $\BigO{n\log{n}}$ interactions each. 

The process starts with junta election in order to start two nested phase clocks. 
Agents perform an
interaction of the outer phase clock once per phase of the 
inner phase clock.
In the beginning, every agent runs the protocols for the phase
clocks, junta election, and leader election in parallel. Whenever an agent
encounters another agent on a higher (junta) level, it resets the clocks and the
leader election protocol. In that way, all agents eventually run the phase
clocks and the leader election process based on the junta on the highest level.
The actual leader election process is quite simple: the set of leaders is halved from
phase to phase, and the time is measured by the inner phase clock.

For stable leader election the authors of \cite{GS18} combine their protocol with a slow 
protocol which is always correct in the following way:  
the outer phase clock runs only if at least one leader remains
in the system.
If at some point it counts to $m$, \whp/ a total of $\BigTheta{n\log^2{n}}$ 
interactions have occurred. At this time, all agents override their current leader 
state from the slow protocol with that of the fast protocol.
When the outer phase clock ticks (i.e., it reaches $m$), at least one leader
exists. We equip each agent $v$ with an additional flag
$\Done1 v$, initially set to $\tfalse$. It is set to \ttrue as soon as the
outer phase clock ticks.

\begin{lemma}[\cite{GS18}] \label{lem:leader-election}
The uniform protocol \protocol{leader\_elect} elects a unique leader. It stabilizes
in $\BigO{n\log^2{n}}$ interactions, using  $\BigO{\log\log{n}}$
many states, \whp/. Furthermore, after at most $\BigO{n \log^2{n}}$ interactions all
agents $v$ have $\Done1 v$ set to \ttrue \whp/, and at that time there is exactly one leader \whp/.
\end{lemma}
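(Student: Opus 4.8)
The plan is to build \protocol{leader\_elect} as the composition of four ingredients from above---junta election, a pair of nested phase clocks, a randomized halving rule for the set of leaders, and a slow always-correct fall-back---and to read off each claim of the lemma from the corresponding ingredient. Correctness and the time and space bounds then follow by combining \cref{lem:junta,lem:phase-clocks,lem:broadcast} with a single genuinely probabilistic estimate for the halving.

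First I would establish the \whp/ timing. By \cref{lem:junta} the junta process fixes the maximal level $\Level*$ and marks a non-empty junta of size $\BigO{\sqrt n\log n}$ within $\BigO{n\log n}$ interactions; a maximum broadcast of $\Level*$ (\cref{lem:broadcast}) then aligns every agent to the top junta within a further $\BigO{n\log n}$ interactions, after which no agent ever sees a higher level and the reset rule is never triggered again. From this point the inner phase clock of \cref{lem:phase-clocks} partitions time into phases of length $\BigTheta{n\log n}$ that all agents enter and leave within a $\BigO{n\log n}$ window of one another. The leader set is then thinned once per inner phase: every surviving leader draws a fresh random bit, the phase-wide maximum bit is broadcast, and each leader whose bit is below that maximum demotes itself. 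At least one leader always survives a phase (a holder of the maximum), so the set never empties.

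The key probabilistic step is that $\BigTheta{\log n}$ such phases reduce the leader set to a single agent \whp/. I would observe that two fixed leaders both survive the first $k$ phases only if each draws the phase-maximum bit in every one of those phases, an event of probability at most $2^{-k}$; a union bound over the at most $\binom{n}{2}$ pairs shows that after $k = \BigTheta{\log n}$ phases exactly one leader remains with probability $1 - n^{-\BigOmega{1}}$. Tuning the outer phase clock so that it completes one cycle over exactly these $\BigTheta{\log n}$ inner phases, i.e.\ after $\BigTheta{n\log^2 n}$ interactions \whp/, lets the junta use its tick both as the global ``stop'' signal and, by broadcasting it, to set $\Done1 v\to\ttrue$ at every agent within a further $\BigO{n\log n}$ interactions; the total is $\BigO{n\log^2 n}$ interactions \whp/. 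Because the outer clock is permitted to advance only while at least one leader is present, when it ticks there is always at least one leader, and by the halving estimate there is \whp/ exactly one, which gives the last two assertions.

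Stability with probability $1$ comes from running a trivial always-correct leader-election protocol in parallel and having every agent overwrite its leader bit by the fast protocol's value only when the outer clock ticks; the gating guarantees this never erases all leaders, so the slow protocol is free to repair the rare ($n^{-\BigOmega{1}}$) event that the fast protocol left zero or several leaders, and the configuration reached once a unique leader has been produced is stable. Finally, the state count is dominated by the junta level, which ranges over $\BigO{\log\log n}$ values by \cref{lem:junta}; the phase clocks, the phase counter, the leader bit, and the $\Done1$ flag each contribute only $\BigO{1}$, so the whole state space has size $\BigO{\log\log n}$ \whp/. The main obstacle is the careful coordination of the two timescales: proving that the nested phase clocks can be tuned so the outer tick lands after $\BigTheta{n\log^2 n}$ interactions---long enough for the halving bound above to apply, yet short enough to keep the $\BigO{n\log^2 n}$ bound---while the gating simultaneously preserves at least one leader and leaves room for the slow protocol to enforce probability-$1$ correctness.
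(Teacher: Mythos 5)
This lemma is not proved in the paper at all---it is imported verbatim from \cite{GS18}, and the paper only sketches the protocol in its Leader Election paragraph; your reconstruction (junta election feeding nested phase clocks, per-inner-phase halving of the leader set with a pairwise $2^{-k}$ survival bound and a union bound over pairs, the leader-gated outer clock whose tick sets $\Done1 v$ and overrides a slow always-correct fall-back, and a state count dominated by the $\BigO{\log\log n}$ junta level) matches that sketch point for point. Your proposal is therefore correct and takes essentially the same approach as the source the paper relies on.
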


\paragraph{Fast Leader Election} \label{x-sec:fast-leader-election}
In \cite{DBLP:journals/corr/abs-1805-04586} the authors describe a stable
leader election protocol that admits a trade-off between the running time and
the number of states. When using $\BigOTilde{n}$ states, their protocol
stabilizes \whp/ in $\BigO{n\log{n}}$ interactions. In the following, we call
this protocol \protocol{FastLeaderElection}. The main idea of the protocol is to use 
$\BigO{\log n}$ random bits to reduce the number of active leaders much faster than in the original protocol from \cite{GS18}.

\begin{lemma}[restate=restateFastLeaderElection,label=lem:fast-leader-election,name={\cite{DBLP:journals/corr/abs-1805-04586}}]
The uniform protocol \protocol{FastLeaderElection} elects a unique leader. It
stabilizes in $\BigO{n\log{n}}$ interactions, using $\BigOTilde{n}$
many states, \whp/. Furthermore, after at most $\BigO{n \log^2{n}}$ interactions all
agents have $\Done1 u$ set to \ttrue \whp/.
\end{lemma}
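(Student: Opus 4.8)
The plan is to establish the three claims of the lemma --- always-correct election of a unique leader, $\BigO{n\log{n}}$ stabilization \whp/, and the $\BigOTilde{n}$ state bound \whp/ --- following the construction of \cite{DBLP:journals/corr/abs-1805-04586}, which composes a fast randomized reduction of the candidate set with a slow always-correct backup in the spirit of \protocol{leader\_elect} (\cref{lem:leader-election}).

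For the fast part, I would have all agents start as leader candidates and, using the junta and phase-clock synchronization of \cref{lem:junta,lem:phase-clocks}, proceed in a constant number of rounds. In each round every surviving candidate draws a fresh random value from a range of size $\BigOTilde{n}$, and a single \emph{maximum broadcast} (in the sense of the paragraph preceding \cref{lem:broadcast}) propagates the round's maximum to all agents; a candidate survives the round iff its value equals this maximum, and all others become followers. Each broadcast costs $\BigO{n\log{n}}$ interactions \whp/ by \cref{lem:broadcast}, so a constant number of rounds costs $\BigO{n\log{n}}$ in total. The key probabilistic estimate is that one round with range $\BigOTilde{n}$ reduces an initial set of $n$ candidates to only $\BigO{\log{n}/\log\log{n}}$ survivors \whp/ (a balls-into-bins estimate: the number of the $n$ samples attaining the maximum cell among $\BigOTilde{n}$ cells), after which a second round --- whose range $\BigOTilde{n}$ now vastly exceeds the square of the surviving count --- leaves a \emph{unique} maximum \whp/, yielding a unique leader \whp/.

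To upgrade \dq{\whp/ correct} to \dq{always correct}, I would compose the randomized rounds with a slow backup exactly as in \cite{GS18}: a maximum broadcast never eliminates all candidates (the global maximum always survives), so at least one leader persists at every step, and the backup, gated by the outer phase clock, keeps breaking ties among the surviving candidates until a single leader remains with probability $1$. The flag $\Done1 u$ is set when the outer clock ticks, which by \cref{lem:phase-clocks} occurs within $\BigO{n\log^2{n}}$ interactions \whp/, giving the last assertion verbatim as in \cref{lem:leader-election}. For the state bound, the dominant term is the $\BigOTilde{n}$-sized random value carried by each agent during a broadcast; the junta level ($\BigO{\log\log{n}}$ states by \cref{lem:junta}), the constant-size phase clock, and the backup's $\BigO{\log\log{n}}$ states are all lower order, so the whole state space is $\BigOTilde{n}$ \whp/.

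The main obstacle, and the place where all the care is needed, is the randomized reduction, and it is compounded by uniformity. Since the transition function may not depend on $n$, an agent cannot sample from a range of size exactly $\BigOTilde{n}$ outright; the range must be tied to an on-line estimate of $n$ (via the junta level, for which $2^{\Level*} = \BigTheta{\log{n}}$ \whp/ by \cref{lem:junta}, or via the phase count), so both the $\BigOTilde{n}$ state bound and the collision estimate hold only \whp/ and only conditioned on the clock behaving as in \cref{lem:phase-clocks}. Quantifying this conditioning --- simultaneously pushing the probability of a residual tie below $n^{-\BigOmega{1}}$ while holding the sampling range, and hence the state space, at $\BigOTilde{n}$ within a constant number of rounds --- is precisely the time-space trade-off of \cite{DBLP:journals/corr/abs-1805-04586}, and is the crux of the argument.
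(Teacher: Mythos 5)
Your overall architecture (candidates draw random values via synthetic coins, a maximum broadcast propagates the round's winner, non-maximal candidates become followers, the range is tied to the junta level, constantly many rounds) matches the paper's protocol in \cref{alg:leader-election}, and your \dq{always at least one leader} argument is exactly the paper's. But the probabilistic core of your scheme cannot be instantiated, and the failure is precisely at the point you defer to \cite{DBLP:journals/corr/abs-1805-04586} as \dq{the crux.} Your two-round analysis needs the sampling range to be simultaneously at least $n/\polylog{n}$ (for the balls-into-bins bound of $\BigO{\log{n}/\log\log{n}}$ survivors in round one and the birthday bound in round two) and at most $\BigOTilde{n}$ (for the state bound). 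The only uniform handle on $n$ available before a leader exists is the junta level, and \cref{lem:junta} pins it down only to $\log\log{n} - 4 \leq \Level* \leq \log\log{n} + 8$. A constant-factor estimate of $\log n$ exponentiates into a polynomial-factor estimate of $n$: any range of the form $2^{2^{\Level*-c}}$ is only guaranteed to lie in $\bigl[n^{2^{-4-c}}, n^{2^{8-c}}\bigr]$, e.g.\ for $c=0$ anywhere between $n^{1/16}$ and $n^{256}$. Whichever constant $c$ you fix, the range may come out polynomially too small (then round one leaves $n^{\BigOmega{1}}$ survivors and round two's birthday bound fails) or polynomially too large (then the $\BigOTilde{n}$ state bound is blown). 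You would need an additive-$\BigO{\log\log n}$ approximation of $\log n$ to get a $\BigOTilde{n}$ range, and nothing available at this stage of the paper provides that.

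The paper's proof resolves this by inverting the trade-off: it runs a \emph{large} constant number of rounds ($2^{13}$ phases), each with the deliberately undershot range $2^{2^{\Level u - 8}}$, which \whp/ is at most $2^{\log n} = n$ states, so the space bound holds no matter where in its window $\Level*$ falls. Uniqueness is then argued \emph{pairwise across all rounds} rather than round by round: two candidates both survive to the end only if they sample identical bit strings in every round, which has probability $2^{-2^{13}\cdot 2^{\Level*-8}} \leq n^{-\BigOmega{1}}$ \whp/, because the number of rounds is chosen large enough to compensate for the worst-case junta underestimate of $\log n$; a union bound over all agents other than a fixed surviving contender then gives a unique leader \whp/. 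This \dq{many rounds of few bits} structure, analyzed by string-collision plus union bound, is what makes uniformity compatible with the state budget, and it is qualitatively different from your \dq{two rounds of $\BigOTilde{n}$-range values} structure. A minor further discrepancy: in the paper $\Done1 u$ is set when the (single) phase counter reaches its constant target, i.e.\ within $\BigO{n\log{n}}$ interactions \whp/; your outer-clock mechanism borrowed from \cref{lem:leader-election} would still satisfy the lemma's $\BigO{n\log^2{n}}$ claim, but is not needed here, and the probability-$1$ backup you attach is handled in the paper at the level of \protocol{CountExact} (via error flags when two leaders meet), not inside \protocol{FastLeaderElection} itself.
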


\section{Approximate Counting } \label{sec:approx}

In this section we show \cref{thm:approx}. In \cref{sec:search-protocol} we
assume that a unique leader exists and we present a protocol which calculates
$\floor*{\log n}$ or $\ceil*{\log n}$ using a load balancing algorithm. In
\cref{sec:analysis-search-protocol} we analyze the protocol. In
\cref{sec:putting-things-together} we show how to combine the protocol from
\cref{sec:search-protocol} with a leader election protocol. This shows the
first part of \cref{thm:approx}. Finally, in \cref{sec:stable-short} we show
how to build a stable protocol by showing the correctness of our error
detection mechanism. This shows the second and the third part of \cref{thm:approx}.

\subsection{Approximating $n$ with a Leader} \label{sec:search-protocol}
In this section we assume that a unique leader is given and that all agents are
synchronized via the phase clocks. The main idea of our algorithm is to inject
an increasing amount of tokens into the system and to use a load balancing
routine to estimate the number of agents in the system. The process is finished
as soon as roughly $n$ tokens are injected into the system. To save on the
number of states, the agents do not store the exact number of tokens, they hold
but the logarithm of that number.

Every agent $v$ stores two variables $\left(\K v, \Done2 v\right) \in \set{-1,
0, 2, \dots} \times \set{0,1}$, initially set to $(-1, 0)$. The variable $\K v$
stores the logarithm of the \emph{load} of agent $v$, where the special value
$-1$ indicates that the agent is empty. The variable $\Done2 v$ indicates that
the search is finished. The leader $u$ now orchestrates a linear search over
$\K u \in \set{0, 1, \dots }$ and finds $\K u^*$ for which $\log{n} - 1 < \K
u^* < \log{n} +1$. The protocol runs in rounds consisting of multiple phases
each. At the beginning of round $r$ the leader injects $2^{r}$ many tokens
into the system. The load is then balanced using a \emph{powers-of-two load
balancing} process which restricts the load of any agent to a power of two. The
search stops once an agent $v$ has load larger than $1$ ($\K v > 0$). At this
time the leader $u$ sets $\Done2 u$ to \ttrue and broadcasts the value to all
agents using one-way epidemics. 

\medskip

In the classical load balancing process (cf.\ \cite{BFKK19}), it is assumed
that $m$ indistinguishable tokens are distributed arbitrarily among $n$ agents.
At interaction $t$, the \emph{load vector} $\mathbf{L}(t)$ is defined as
$\mathbf{L}(t) = \left(\ell_1(t), \dots, \ell_n(t) \right) \in \N_{0}^{n}$,
where $\ell_i(t)$ is the number of tokens (load) present at agent $i$ at that
interaction. Assume agents $u$ and $v$ 
balance their loads in interaction $t\geq1$, then
$\left( \ell_u(t+1), \ell_v(t+1) \right) = \left( \floor*{(\ell_u(t) + \ell_{v}(t))/2}, 
\ceil*{(\ell_u(t) + \ell_v)/2} \right)$.
For the powers-of-two load balancing process we define the \emph{logarithmic load vector}
$\mathbf{K} = \left(k_1, \dots, k_n \right) \in \set{-1, 0, 1,
\dots}^n$ such that for any agent $v$ we have
$\ell_v(t) = 2^{k_v}$ if $k_v \geq 0$ and $\ell_v(t) = 0$ if $k_v = -1$.
Assume now that agent $u$ interacts with agent $v$. A balancing action is only permitted  
if either $u$ or $v$ is empty (i.e., $\K u=-1$ or $\K v=-1$) and both of them are not the leader. 
Let $k'_u$ and $k'_v$ be the resulting logarithmic load values.
Then
\begin{equation} \label{eq:powers-of-two-balancing}
\left( k'_u, k'_v \right) =
\begin{cases}
\left( k_u-1, k_u -1 \right) & \text{if } k_u > 0 \text{ and } k_v = -1 \\
\left( k_v-1, k_v -1 \right) & \text{if } k_u = -1 \text{ and } k_v > 0 \\
\left( k_u, k_v \right) & \text{otherwise.}
\end{cases}
\end{equation}

\begin{lstalgo}[h]{The \Search Protocol, centerpiece of protocol \protocol{Approximate}\label{alg:search}}
Interaction $(u, v)$ in the #\Search Protocol#:
	if $\Leader u = \ttrue$ and $\Done2 u = \tfalse$ then #\comment{Leader:}#

		if $\Phase u = 1$ and $\FirstTick u = \ttrue$ then #\comment{Phase 1: load infusion}#
			$\K v \gets \K u$#\label{ln:load-infusion}#

		if $\Phase u = 4$ and $\FirstTick u = \ttrue$ then #\comment{Phase 4: decision}#
			if $\K v \leq 0$ then#\label{ln:decision}#
				$\K u \gets \K u + 1$
			else
				$\Done2u \gets \ttrue$#\label{ln:conclude-search}\medskip#

	if $\Leader u = \tfalse$ and $\Leader v = \tfalse$ then #\comment{Followers:}#

		if $\Phase u = 0$ then #\comment{Phase 0: initialize}#
			$\K u \gets -1$ #\label{ln:reset-to--1}#

		if $\Phase u = 2$ then #\comment{Phase 2: load balancing}#
		
			if $\min\set{\K u, \K v} = -1$ and $\max\set{\K u, \K v}>0$ then 
				$\K u, \K v \gets \max\set{\K u, \K v} - 1$#\label{ln:load-balancing}#

		if $\Phase u = 3$ then #\comment{Phase 3: one-way epidemics}#
			$\K u, \K v \gets \max\set{\K u, \K v}$#\label{ln:broadcast}#
\end{lstalgo}
The \Search Protocol is defined in \cref{alg:search}. The protocol runs in
rounds of 5 phases. Every agent $v$ uses the variable $\Phase{v}$ to count the number of
phases modulo $5$ and the flag $\FirstTick v$ which is set to \ttrue when $v$
initiates the first interaction of a phase (see \cref{sec:auxiliary-protocols}). In
Phase $1$ and Phase $4$ the leader is active, in the remaining phases the
non-leaders are active. 

\begin{itemize}
\item Phase $0$ and Phase $1$ are used for the initialization. 
Every agent $v$ which is not the leader resets $\K v$ to $-1$ (\cref{ln:reset-to--1}).
During the first interaction $(u,v)$ in Phase $1$ the leader $u$ transfers $2^{\K u}$ tokens to $v$
(\cref{ln:load-infusion}).
\item In Phase $2$, the non-leader agents perform powers-of-two load balancing, as described above.
(\cref{ln:load-balancing}).
Phase $3$ consists of one-way epidemics where the agents communicate their highest load value (\cref{ln:broadcast}).
\item In Phase $4$ (\cref{ln:decision}), the leader $u$ decides if the search is finished or not. 
If the maximum logarithmic load is less than $1$ it concludes that the injected load was smaller than or equal to $\log{n} - 1$. 
The leader $u$ therefore proceeds to the next round and injects twice as many tokens.
If the observed maximum logarithmic
load is larger than $1$, the leader concludes the protocol by setting
$\Done2 u$ to $\ttrue$.
\end{itemize}
The number of agents is now estimated as~$2^{\K u}$.
In the following, the state of an agent $v$ in this algorithm is called $s(v)$.
The state space $S$ is the Cartesian product of the domains of the individual
variables ($\Phase v$, $\Phase v$, $\FirstTick v$, $\Leader v$, $\K v$, $\Done2
v$). For an overview over the states of node $v$ in the \Search Protocol, see 
\cref{fig:search-state-space}.

\begin{figure}[ht]
\framebox[\columnwidth]{
\scalebox{0.8}{
\begin{minipage}{\columnwidth}
\centering
\begin{equation*}
\setlength\arraycolsep{0pt}
\begin{array}{rccccc}
s(v) ={}& \Big( 
\matrixbracex{\Phase v,~ \FirstTick v,}{Phase Clocks}
&
& \matrixbracex{\Leader v,}{Leader Election}
&
& \matrixbracex{\K v,~ \Done2 v }{\Search Protocol}
\Big) \\
S ={} &
\set{0, \dots, 4 }\times \set{0,1}  &
{}\times{} &
\set{0,1} &
{}\times{} &
{\set{-1, 0, 1, \dots} \times \set{0,1}}
\\
\end{array}
\end{equation*}
\end{minipage}
}
}

\caption{State space of the \Search Protocol (\cref{alg:search})}\label{fig:search-state-space}
\end{figure}

\subsection{Analysis of the \Search Protocol} \label{sec:analysis-search-protocol}

Based on the definition in \cref{eq:powers-of-two-balancing} we first show that the
powers-of-two load balancing process on $n$ agents balances the load such that the maximum load is at most $1$ (i.e., $\K v \leq 0$ for all agents~$v$) as long as at most $3/4 n$ tokens are injected by the leader.
\begin{lemma}[restate=restateLoadBalancing,label=lem:powers-of-two-load-balancing]
Assume there exists an agent $u$ with $k_u(0) = \kappa$ and $k_v(0) = -1$ for all $v \neq u$.
If $2^{\kappa} \leq 3/4\cdot n$, then \whp/ after $t=16 n \log n$ interactions $\max_v\set{k_v(t)} = 0$.
\end{lemma}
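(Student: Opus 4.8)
The plan is to track the \emph{maximum level} $\mathcal M(t)=\max_v \K v(t)$ and to show that \whp/ no \emph{heavy} agent (one with $\K v\ge 1$) survives past $t=16n\log n$. Since the total load $2^{\kappa}\ge 1$ is conserved, some agent is always non-empty, so $\mathcal M(t)\ge 0$ always, and the absence of heavy agents immediately gives $\mathcal M(t)=0$. The single structural fact that drives everything is a \emph{configuration-independent} lower bound on the rate at which a heavy agent splits. Because load is conserved and every non-empty agent holds at least one token, the number of non-empty agents never exceeds $2^{\kappa}\le 3n/4$, so at every interaction at least $n/4$ agents are empty. By \cref{eq:powers-of-two-balancing} a fixed heavy agent $x$ performs a balancing step exactly when it is paired (in either role) with one of these empty agents, which happens with probability at least $\frac{2\cdot(n/4)}{n(n-1)}\ge \frac1{2n}$, regardless of the rest of the configuration.

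The first thing to get right is the decomposition. Each step of \cref{eq:powers-of-two-balancing} takes one agent at level $j$ and one empty agent and turns them into two agents at level $j-1$, so the whole execution organizes the initial block of $2^{\kappa}$ tokens into a complete binary \emph{splitting tree} of depth $\kappa$: the root is $u$ at level $\kappa$, each internal node is a split, and the $2^{\kappa}$ leaves are the agents that finally carry load $1$. The maximum level is non-increasing and drops from $j$ to $j-1$ precisely when the last level-$j$ agent splits, so $\mathcal M$ reaches $0$ exactly when \emph{every} root-to-leaf path has completed all $\kappa$ of its splits. I would resist the temptation to bound the epochs (the periods during which $\mathcal M=j$) separately: waiting for the last straggler of each of the $\kappa\approx\log n$ levels costs $\BigTheta{n\log n}$ apiece and sums to $\BigTheta{n\log^2 n}$, which is too weak. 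The point is that stragglers at different levels overlap in time, and this overlap is captured by following individual lineages rather than whole levels.

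So fix one of the $2^{\kappa}$ root-to-leaf paths (label the two children of each split by, say, initiator-copy and responder-copy, so that every string in $\set{L,R}^{\kappa}$ names a potential path). Its traversal time is a sum $W_1+\dots+W_{\kappa}$, where $W_i$ is the number of interactions between the creation of the level-$(\kappa-i+1)$ agent on this path and its split. Throughout the $i$-th waiting period the tracked agent sits at a level $\ge 1$, hence is heavy, so by the rate bound above $W_i$ is stochastically dominated, conditionally on the history up to the start of that period, by a geometric variable with success probability $1/(2n)$; therefore the traversal time is dominated by a negative binomial, i.e.\ the number of trials needed for $\kappa$ successes of probability $1/(2n)$. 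Rewriting the event \dq{path not finished by $t$} as \dq{fewer than $\kappa$ successes among $t$ Bernoulli($1/(2n)$) trials}, the relevant binomial has mean $8\log n$ at $t=16n\log n$, which by a Chernoff lower tail is bounded below $\kappa\le\log n$ only with probability $n^{-\BigOmega{1}}$. Crucially, the union bound over the at most $2^{\kappa}\le n$ paths needs no independence, so summing these failure probabilities still leaves $\mathcal M(16n\log n)=0$ \whp/. This last balancing act is the main obstacle, and it explains the shape of the bound: the constant $16$ is chosen so that the mean $8\log n$ sits a large enough constant factor above $\kappa$ to absorb the extra $\log(\#\text{paths})=\BigTheta{\log n}$ union factor, which is exactly why the process finishes in $\BigTheta{n\log n}$ rather than $\BigTheta{n\log^2 n}$ interactions.
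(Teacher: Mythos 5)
Your proof is correct, and its probabilistic engine coincides with the one in the paper's own proof (\cref{apx:load-balancing}): conservation of the $2^{\kappa}\le 3n/4$ tokens keeps at least $n/4$ agents empty at all times, so a fixed object that still needs to make progress succeeds with probability at least $1/(2n)$ per interaction; each object needs at most roughly $\log n$ successes, the choice $t=16n\log n$ puts the dominating binomial's mean at $8\log n$, and a Chernoff bound plus a union bound over at most $n$ objects concludes. The genuine difference is the choice of objects. The paper tracks the $2^{\kappa}$ individual \emph{tokens}: each agent's load is viewed as an ordered stack, a split sends every second token to the empty partner, so each split at least halves a participating token's \emph{height}, and a token needs about $\log n$ halvings to reach height $0$; the global event \dq{all tokens at height $0$} is identical to your \dq{all lineages complete}, and your root-to-leaf paths are in bijection with the tokens (follow a token through its successive holders). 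Your decomposition buys two things: it dispenses with the token-ordering and height bookkeeping altogether, and it treats dependence more carefully --- the paper asserts that the good interactions of a fixed token \dq{occur independently}, which is not literally true since the success probability depends on the evolving configuration, and the clean justification is exactly the conditional stochastic domination by geometrics (hence by a negative binomial, hence the binomial-tail rewriting) that you spell out. What the paper's device buys in exchange is generality: the token-height argument is imported essentially verbatim from the analysis of the \emph{classical} balancing process in \cite{BFKK19}, where loads need not be powers of two and no splitting tree exists, so one proof pattern serves both processes, whereas your tree decomposition is specific to the powers-of-two rule. Both routes correctly avoid the level-by-level epoch analysis, which, as you note, would only give $\Theta(n\log^2 n)$.
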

The proof of \cref{lem:powers-of-two-load-balancing} uses the same ideas and
shows a similar statement
as in Lemmas~2 and~3 from \cite{BFKK19} for the classical load balancing
process.

\newcounter{lstnumberAlgApproximate}
\begin{lstalgo}[t]{Protocol \protocol{Approximate}\label{alg:approx}}
Interaction $(u, v)$ of protocol #\protocol{Approximate}#:
	if $\Level v > \Level u$ then
		#re-initialize states of $u$ for \protocol{PhaseClocks}, \protocol{LeaderElection}, \protocol{SearchProtocol}#

	$\text{\protocol{JuntaProcess}}\,[(\Level u, \Junta u),\, (\Level v, \Junta v)]$ #\comment{update auxiliary protocols}#
	$\text{\protocol{PhaseClocks}}\,[(\Phase u, \FirstTick u),\, (\Phase v, \FirstTick v)]$

	if not $\Done1 u$ then #\comment{Stage 1: Leader Election \cite{GS18}}#
		$\text{\protocol{LeaderElection}}\,[(\Leader u, \Done1 u),(\Leader v, \Done1 v)]$

	if $\Done1 u$ and not $\Done2 u$ then #\comment{Stage 2: Search Stage}#
		$\text{\protocol{SearchProtocol}}\,[(\K u, \Done2 u), (\K v, \Done2 v)]$

	if $\Done1 u$ and $\Done2 u$ then #\comment{Stage 3: Broadcasting Stage}\setcounter{lstnumberAlgApproximate}{\value{lstnumber}}#
		$(\Done2 v, \K v) \gets (\ttrue, \K u)$
\end{lstalgo}
\begin{figure*}[ht]
\framebox[\textwidth]{
\scalebox{0.8}{
\begin{minipage}{\textwidth}
\centering
\begin{equation*}
\setlength\arraycolsep{0pt}
\begin{array}{lccccccc}
\text{state of node $v$:}\quad & \Big(
\matrixbracex{
\Level v,~
\Active v,~
\Junta v,}{Junta Process}
&
& \matrixbracex{
\Clock v,~
\Phase v,~
\FirstTick v,}{Phase Clocks}
&
& \matrixbracex{
\Leader v,~
\Done1 v,}{Leader Election}
&
& \matrixbracex{s(v)}{\Search Protocol}
\Big)
\\
\text{state space:}&
\mathbb{N}_0 \times \set{0,1}\times\set{0,1} &{}\times{}&
[m] \times \set{0,\dots,4}\times\set{0,1} &{}\times{}&
\set{0,1}\times\set{0,1} &{}\times{}&
 S\phantom{{}\Big)}
\end{array}
\end{equation*}
\end{minipage}
}
}
\caption{State space of protocol \protocol{Approximate} (\cref{alg:approx})}\label{fig:apx-state-space}
\end{figure*}

In the following, we assume that precisely one agent $u$ is the leader and all
agents synchronize themselves via the phase clocks. Based on this assumption we
can show the following lemma for the \Search Protocol.
\begin{lemma} \label{lem:search}
After at most $\BigO{\log{n}}$ rounds the leader $u$ sets $\Done2u$ to
\ttrue, \whp/.  At that time we have \whp/ that
$3/4\cdot n < 2^{\K u} \leq 2^{\ceil*{\log{n}}}$.
\end{lemma}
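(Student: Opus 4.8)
The plan is to analyze the \Search Protocol one round at a time, treating each round as a fresh instance of powers-of-two load balancing. First I would fix the phase-clock constant $c$ from \cref{lem:phase-clocks} large enough that, \whp/, every phase lasts at least $16n\log n$ interactions; this is exactly what \cref{lem:powers-of-two-load-balancing} requires for the balancing in Phase~$2$ to finish, and it is also more than enough for the one-way epidemic of Phase~$3$ to complete by \cref{lem:broadcast}. Since each round consists of five phases, one round takes $\BigO{n\log n}$ interactions, so the total running time is $\BigO{n\log n}$ times the number of rounds. I would then record the two structural facts that drive the argument: (a)~Phase~$0$ resets every follower to $\K v = -1$ and Phase~$1$ places the leader's injected load onto a single follower, so at the start of Phase~$2$ of the round with $\K u = \kappa$ the follower configuration is exactly the initial configuration assumed by \cref{lem:powers-of-two-load-balancing}, with one agent holding $2^{\kappa}$ tokens and all others empty; and (b)~the balancing rule \cref{eq:powers-of-two-balancing} conserves the total load, which therefore stays equal to $2^{\kappa}$ throughout Phase~$2$.

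With these facts in place I would establish the two opposing directions that pin down the stopping value. For the \emph{continue} direction: if $2^{\kappa} \le \frac{3}{4}n$, then by \cref{lem:powers-of-two-load-balancing} the maximum logarithmic load is $0$ at the end of Phase~$2$ \whp/, the Phase~$3$ epidemic broadcasts this maximum, and so in Phase~$4$ the leader reads $\K v \le 0$ and increments $\K u$ rather than concluding. For the \emph{conclude} direction: if $2^{\kappa}$ exceeds the number of (non-leader) agents, then since the load is conserved and every nonempty load is a power of two, not all loads can be at most $2^0 = 1$, so by pigeonhole some follower ends Phase~$2$ with $\K v \ge 1$; after the broadcast the leader reads $\K v > 0$ in Phase~$4$ and sets $\Done2 u$ to \ttrue. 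The second direction is deterministic once balancing and broadcast have completed, and I would note that the leader being excluded from balancing only replaces $n$ by $n-1$, which the constant slack in $\frac{3}{4}$ absorbs.

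Let $\kappa^{*}$ be the value of $\K u$ when $\Done2 u$ is set. The lower bound $2^{\kappa^{*}} > \frac{3}{4}n$ follows because, by the continue direction, the leader \whp/ does not conclude in any round with $2^{\kappa} \le \frac{3}{4}n$; hence \whp/ it can only stop at a $\kappa^{*}$ with $2^{\kappa^{*}} > \frac{3}{4}n$. The upper bound follows from the conclude direction applied to $\kappa = \ceil*{\log n}$: there $2^{\kappa} \ge n$ forces the leader to stop if it has not already, so $\kappa^{*} \le \ceil*{\log n}$ and thus $2^{\kappa^{*}} \le 2^{\ceil*{\log n}}$. In particular $\kappa^{*} = \BigO{\log n}$, and since $\K u$ starts near $-1$ and increases by one per continued round, the search uses $\BigO{\log n}$ rounds, as claimed. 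The error probability is controlled by a union bound over the $\BigO{\log n}$ rounds in which \cref{lem:powers-of-two-load-balancing} (together with the phase-clock and broadcast guarantees) must hold, each failing with probability $n^{-\BigOmega{1}}$, so the whole event holds \whp/.

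The main obstacle I expect is making the per-round reduction to \cref{lem:powers-of-two-load-balancing} rigorous: I must argue that the Phase~$0$ reset genuinely clears the configuration each round so that the balancing really restarts from the clean single-agent initial state, that the phase boundaries (via \cref{lem:phase-clocks}) separate the reset, injection, balancing, and broadcast cleanly so these processes do not interfere across phases, and that conditioning on the success of earlier rounds does not spoil the bound needed for the union bound. The border rounds with $\frac{3}{4}n < 2^{\kappa} \le n$, where the outcome is genuinely ambiguous, only ever need the one-sided guarantees above and therefore cause no difficulty.
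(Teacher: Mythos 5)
Your proposal is correct and follows essentially the same route as the paper's proof: per-round reduction to \cref{lem:powers-of-two-load-balancing} for the \emph{continue} direction when $2^{\kappa}\le \tfrac34 n$, a conservation-plus-pigeonhole argument over the $n-1$ non-leader agents for the \emph{conclude} direction when $2^{\kappa}\ge n$, and one-sided treatment of the ambiguous band $\tfrac34 n < 2^{\kappa} < n$, yielding at most $\ceil*{\log n}$ rounds and the bound $\tfrac34 n < 2^{\K u}\le 2^{\ceil*{\log n}}$. Your explicit union bound over rounds and the remark that the leader's exclusion only replaces $n$ by $n-1$ are details the paper leaves implicit, but they do not change the argument.
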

\begin{proof}
The \Search Protocol runs in multiple phases. In the following, we combine every
five consecutive phases to a round such that round $r\geq0$ consists of phases
$5r$ to $5r+4$. For a fixed round $r$ let $\K v(r)$ be the value of
variable $\K{}$ of agent $v$ at the beginning of round $r$.
Assuming that all agents are properly synchronized by the phase clocks (see \cref{lem:phase-clocks}),
we can \whp/ assume that
the number of interactions in any round is sufficient for spreading information with one-way 
epidemics (see \cref{lem:broadcast}) and for the powers-of-two load balancing (see \cref{lem:powers-of-two-load-balancing}).

Observe that the leader sets $\Done2u$ to \ttrue in \cref{ln:conclude-search}
of Phase 4 of the \Search Protocol. Let $v$ be an agent with maximal load at
the beginning of Phase 3. In Phase 3 the agents used one-way epidemics to
disseminate the value $k_v$ to all agents. Hence, $\Done2u$ is set to \ttrue in
the first round $r$ where an agent $v$ has $k_{v}(r)>0$ after the load
balancing, meaning agent $v$ has a load of at least two.

To show the lemma we consider 3 cases, depending on the value $\K u(r)$ which 
corresponds to the load injected by the leader at the beginning of round $r\geq0$.

\paragraph{Case $2^{\K u(r)} \leq 3/4\cdot n$:\?}
In Phase $1$ the leader injects $2^r$ many tokens. 
According to \cref{lem:powers-of-two-load-balancing}, \whp/ at the end of Phase
$2$ of Round $r$ no agent will have a load larger that one, meaning for all agents $v$
we have $k_v\leq 0$. The maximum value $k_v$ which is communicated via one-way epidemics is at most zero
and the leader does not set $\Done2u$ to $\ttrue$ in Phase 4 of Round $r$. Hence, it injects 
$2^{r+1}$ tokens at the beginning of the next round. 

\paragraph{Case $3/4 \cdot n < 2^{\K u(r)} < n$:\?}
Again, in Phase $1$ the leader injects $2^r$ many tokens. 
In this case either there exists an agent $v$ with a load larger than one ($k_{v}>0$) at the end of Phase 2,
or for all agents $v$ we have $k_{v}\leq 0$. In the latter case we are back to the previous case: the leader 
does not set $\Done2u$ to $\ttrue$ in Phase 4 of Round $r$. Hence, it injects 
$2^{r+1}$ tokens at the beginning of the next round. 
If there exists an agent $v$ with $k_{v}>0$ the leader sets $\Done2u$ to $\ttrue$. 

\paragraph{Case $n\leq 2^{\K u(r)}$:\?}
In Phase $1$ the leader injects $2^r\geq n$ many tokens. The agents balance at
least $n$ tokens on $n-1$ agents (since the leader does not participate in the
load balancing). Hence, at the end of Phase $2$ of Round $r$ there exists an
agent with a load of at least two, meaning $k_v\geq 1$. The leader sets
$\Done2u$ to $\ttrue$ in Phase 4 of Round $r$.

\medskip

From the above cases it follows that after at most $\ceil*{\log{n}}$ rounds 
$\Done2u$ is set to $\ttrue$ and then $3/4\cdot n < 2^{\K u} \leq 2^{\ceil*{\log{n}}}$.
\end{proof}

\subsection{Combining the \Search Protocol with Leader Election} \label{sec:putting-things-together}

In this section we combine the \Search Protocol with the
\protocol{LeaderElection} protocol from \cite{GS18}. The combined protocol
works as follows. All agents run the \protocol{JuntaProcess} and the
\protocol{PhaseClocks} protocols in parallel to the \protocol{LeaderElection}
protocol and, later, the \Search Protocol. In the junta process, agents cannot
decide whether they have already reached the maximal junta level. Therefore,
some agents already perform some interactions of the \protocol{LeaderElection}
protocol or the \Search Protocol at a lower junta level. These agents might be
badly synchronized, since the phase clock ticks correctly \whp/ only on the
maximal junta level. We therefore define the following procedure that has
already been used in \cite{GS18}: Whenever an agent $u$ interacts with an agent
$v$ in a higher level ($\Level u < \Level v$) all variables for
\protocol{PhaseClocks}, \protocol{LeaderElection} and the \Search Protocol are
re-initialized. In that way, all agents start the \protocol{LeaderElection}
protocol and the \Search Protocol at the maximal junta level from a clean
state.

The protocol uses two flags $\Done1v$ and $\Done2v$ for each agent $v$, which are initially set to \tfalse. The
first flag, $\Done1v$, is set once an agent has concluded the leader election
protocol (see \cref{sec:auxiliary-protocols}). The second flag, $\Done2v$, is set by the \Search Protocol (see \cref{sec:search-protocol}). The two flags $\Done1v$ and $\Done2v$ allow agent $v$ to distinguish between three
stages of the execution of the protocol, the \emph{Leader Election Stage} (1),
the \emph{Search Stage} (2), and the \emph{Broadcasting Stage} (3).

In the Leader Election Stage all agents use the protocol from \cite{GS18} to
elect a leader. Recall that the flag $\Done1 u$ is set to \ttrue once the
external phase clock ticks (see \cref{sec:auxiliary-protocols}). Note that at this
time there exists exactly one leader \whp/ \cite{GS18}.
In the Search Stage, we use the \Search Protocol defined in
\cref{sec:search-protocol}. In the Broadcasting Stage the leader uses one-way
epidemics to inform all other agents of its value $\K u$.

\medskip

We now give the proof of the first part of \cref{thm:approx}, where we show that
protocol \protocol{Approximate} computes either $\floor*{\log{n}}$ or
$\ceil*{\log{n}}$ \whp/. 

\begin{proof}[Proof of \cref{thm:approx-1} of \cref{thm:approx}]
From \cref{lem:junta} it follows that after $\BigO{n\log{n}}$ interactions the 
junta is elected and all agents are inactive. From that point on no agent in the protocol is ever
re-initialized again. Let $u$ be the single leader that
according to \cref{lem:leader-election} \whp/ 
concludes the leader election protocol. According to \cref{lem:search}, the
leader $u$ sets $\Done2 u$ to \ttrue after at most $\ceil*{\log{n}}$ phases of
$\BigO{n\log{n}}$ interactions each, and at that time $\K u$ is either
$\floor*{\log{n}}$ or $\ceil*{\log{n}}$ \whp/. According to
\cref{lem:broadcast}, within $\BigO{n \log{n}}$ further interactions all agents
know the value $\K u$ from the leader $u$. Together, this gives a total number
of $\BigO{n \log^2{n}}$ interactions, after which all agents know the value $\K
u$ from the leader and thus output either $\floor*{\log{n}}$ or
$\ceil*{\log{n}}$ \whp/.

Regarding the required number of states we observe that $\Level v$ from the junta process
and $\K v$ from the \Search Protocol
are the only variables of an agent $v$ that are not of constant
size and thus may grow with the population size $n$. For an overview over
the states of agent $v$, see \cref{fig:apx-state-space}. %
Note that $\Level v$
is bounded \whp/ by $\BigO{\log\log{n}}$ \cref{lem:junta}, and $\K v$
is bounded according to \cref{lem:search} \whp/ by $\BigO{\log{n}}$. Together this
gives that \whp/ the state space has size $\BigO{\log\log{n} \cdot \log{n}}$.
This concludes the proof of \cref{thm:approx-1} of \cref{thm:approx}.
\end{proof}

\subsection{The Stable Protocol} \label{sec:stable-short}
In this section we sketch our stable protocol and proofs of \cref{thm:approx-2,thm:approx-3} of \cref{thm:approx}. The details can be found in \fullversion. Our
stable protocol is a hybrid protocol which combines the protocol \protocol{Approximate} with
a \emph{slow} protocol which always finds the correct solution. 

The slow protocol works as follows. Every agent starts with exactly one token.
Whenever two agents interact and both have the same amount of tokens, one of
them hands its tokens over to the other. The agents do not store the exact
number of tokens they hold but the logarithm of that number. Note that, due to
the definition of the process, the load of every agent is a power of two. After
$\BigO{n^2\log{n}}$ many interactions one agent will store $\floor*{\log n}$
many tokens. For each $0\leq i<\floor*{\log{n}}$ there might be one agent
storing $2^{i}$ tokens (depending on the value of $n$). The number of states
required by this process is $\BigO{\log{n}}$ if it is sufficient for all but
$\log{n}$ agents to know the approximation for $n$. Otherwise, the value
$\floor*{\log n}$ has to be sent to every agent via one-way epidemics. In this
case there are up to $\log n$ many agents that need $\BigO{\log^2 n}$ many
states (exactly those agents which store a value $2^i$).

The hybrid protocol now works as follows. It first runs \protocol{Approximate},
then the leader $u$ checks if the approximated value $\K u$ is correct by
injecting $2^k$ tokens and by balancing them. If the value of $\K u$ is not
correct it switches the output over to the slow protocol. The hybrid protocol
stabilizes \whp/ within $\BigO{n \log^{2} n}$ many interactions using $\BigO{ \log\log n
\cdot \log^2 n }$ many states if all agents have to know the approximation of
$n$, otherwise $\BigO{ \log\log n \cdot \log n }$ states are sufficient.

\section{Counting the Exact Population Size} \label{sec:counting}

\def\ell{l}
\DoneDef{2}{ApxDone}

In this section we consider the problem of counting the exact population size
and present our protocol \protocol{CountExact}. The main idea is as follows.
The protocol first selects a junta using the modified junta process from
\cite{DBLP:journals/corr/abs-1805-04586} (see \cref{sec:auxiliary-protocols}). It then
creates phase clocks and selects a leader using the
\protocol{FastLeaderElection} protocol (see \cref{sec:auxiliary-protocols}).
Again, every agent $v$ has two flags $\Leader v$ and $\Done 1 v$ which indicate
whether $v$ is a leader and $v$ has completed the leader election protocol,
respectively. Recall that the junta uses a variable $\Level v$ that stores the
maximum level which is reached during the junta election. In
\cite{DBLP:journals/corr/abs-1805-04586} it has been shown that in the modified
junta process the level reaches \whp/ a value of $\log\log n \pm c$ for some
constant $c$. Hence, we can use $2^{2^{\Level v}}$ as a first approximation for
$n$. This approximation will be refined in two stages, called
\protocol{ApproximationStage} and \protocol{RefinementStage}, which are
described in \cref{sec:apx-stage} and \cref{sec:refinement-stage},
respectively. The protocol \protocol{CountExact} can be found in
\cref{alg:exact}. The output of every agent in the protocol
\protocol{CountExact} is defined by the \protocol{RefinementStage}.

\begin{lstalgo}[p]{Protocol \protocol{CountExact}\label{alg:exact}}
Interaction $(u, v)$ of protocol #\protocol{CountExact}#:
	if $\Level v > \Level u$ then
		#re-initialize states of $u$ for \protocol{PhaseClocks}, \protocol{FastLeaderElection}, \protocol{ApproximationStage}, \protocol{RefinementStage}#

	$\text{\protocol{JuntaProcess}}\,[\Level u,\, \Level v]$ #\comment{update auxiliary protocols}#
	$\text{\protocol{PhaseClocks}}\,[(\Phase u, \FirstTick u),\, (\Phase v, \FirstTick v)]$

	if not $\Done1 u$ then #\comment{Stage 1: Fast Leader Election}#
		$\text{\protocol{FastLeaderElection}}\,[(\Leader u, \Done1 u),\, (\Leader v, \Done1 v)]$

	if $\Done1 u$ and not $\Done2 u$ then #\comment{Stage 2: Approximation Stage}#
		$\text{\protocol{ApproximationStage}}\,[(\K u, \ell_u, \Done2 u),\, (\K v, \ell_v, \Done2 v)]$

	if $\Done1 u$ and $\Done2 u$ then #\comment{Stage 3: Refinement Stage}#
		$\text{\protocol{RefinementStage}}\,[(\K u, \L u),\, (\K v, \L v)]$
\end{lstalgo}
\begin{figure*}[p]
\framebox[\textwidth]{
\scalebox{0.8}{
\begin{minipage}{\textwidth}
\centering 
\begin{equation*}
\setlength\arraycolsep{0pt}
\begin{array}{lccccccc}
\text{state of node $u$:}\quad & \Big( 
\matrixbracex{\Level u,}{Junta Process} &
&
\matrixbracex{\Phase u,~ \FirstTick u, }{Phase Clock} &
&
\matrixbracex{\Leader u,~ \Done1 u, }{Leader Election} &
&
\matrixbracex{i_u,~ \K u,~ \L u,~ \Done2 u}{Approximation and Refinement Stages}\Big) \\
\text{state space:} &
\mathbb{N}_0 &
{}\times{}&
\mathbb{N}_0 \times \set{0,1} &
{}\times{}&
\set{0,1}\times\set{0,1} &
{}\times{}&
\mathbb{N}_0 \times \set{-1, 0,1, \dots} \times \mathbb{N}_0 \times \set{0,1}
\end{array}
\end{equation*}
\end{minipage}
}
}
\caption{State space of protocol \protocol{CountExact} (\cref{alg:exact})}\label{fig:exact-state-space}
\end{figure*}

\begin{lstalgo}[p]{\protocol{CountExact} Approximation Stage\label{alg:apx-stage}}
Interaction $[(\K u, \ell_u, \Done2 u),\, (\K v, \ell_v, \Done2 v)]$ in the #\protocol{ApproximationStage}#protocol:
	if $\FirstTick u = \ttrue$ then
		if $\Leader u = \ttrue$ and $i_u = 0$ then#\comment{initialize first phase}#
			$\ell_u \gets 1$

		if $\Leader u$ and $\ell_u \geq 4$ then#\comment{found an approximation}#
			$\Done2 u \gets \ttrue$#\label{ln:apx-decision}#
			$\K u \gets i_u \cdot 2^{\Level u-8} - \floor*{\log{\ell_u}}$

		$(i_u, \ell_u) \gets \left(i_u + 1,~ \ell_u \cdot 2^{2^{\Level u-8}}\right)$ #\label{ln:apx-load-explosion}\comment{start new phase: load explosion}#

	$\displaystyle \left(\ell_u, \ell_v\right) \gets \left( \floor*{(\ell_u + \ell_v)/2}, \ceil*{(\ell_u + \ell_v)/2} \right)$ #\label{ln:apx-load-balancing}\comment{classical load balancing}#

	$\Done2 u \gets \max\set{\Done2u, \Done2v}$ #\label{ln:apx-broadcast}\comment{broadcast $\Done2{}$}#
\end{lstalgo}
\begin{lstalgo}[p]{\protocol{CountExact} Refinement Stage \label{alg:refinement-stage}}
Interaction $[(\K u, \L u),\, (\K v, \L v)]$ in the #\protocol{RefinementStage}#protocol:
	if $\Phase u = 0$ then#\comment{initialize agents and broadcast $\K u$}#
		$\left(k_u, k_v, \ell_u, \ell_v \right) \gets \left(\max\set{k_u, k_v},\, \max\set{k_u, k_v},\, 0,\, 0\right)$#\label{ln:refinement-init}#

	if $\FirstTick u = \ttrue$ then
		if $\Phase u = 1$ and $\Leader u = \ttrue$ then#\label{ln:refinement-leader-load}\comment{the leader starts with load $2^8\cdot2^{\K u}$}#
			$\displaystyle \ell_u \gets 2^8 \cdot 2^{k_u}$

		if $\Phase u = 2$ then#\comment{multiply load with $2^{\K u}$}\label{ln:refinement-multiply}#
			$\displaystyle \ell_u \gets \ell_u \cdot 2^{k_u}$

$\displaystyle \left(\ell_u, \ell_v\right) \gets \left( \floor*{(\ell_u + \ell_v)/2}, \ceil*{(\ell_u + \ell_v)/2} \right)$
\end{lstalgo}%

\subsection{Fast Approximation}\label{sec:apx-stage}
In this section we assume that a unique leader has been elected and that all
agents are synchronized via the phase clocks. Additionally, we assume that all
agents are on the same maximal junta level $\Level*$ w.r.t.\ the modified junta
process from \cite{DBLP:journals/corr/abs-1805-04586}.

The goal of protocol \protocol{ApproximationStage} is to compute $\log{n}$ up
to an additive error of $\pm 3$. The protocol starts with injecting
$2^{2^{\Level u-8}}= n^{\eta}$ tokens into the system for some $0 < \eta \leq
1$. It alternates between increasing the number of injected tokens by a factor
of $n^{\eta}$ and load balancing until the total number of tokens $M$ is at
least $n/2$. Unfortunately it is possible that $M$ is very close to
$n^{1+\eta}$, resulting in a multiplicative error. Hence, the protocol outputs
$\K u=\log M-\floor*{\log \L u}$. In \cref{lem:apx-stage} we will show
that $\log n-3\leq k_u\leq \log n+3$.

The interactions of the protocol are defined in \cref{alg:apx-stage}. The
protocol runs in multiple phases of $\BigO{n \log{n}}$ interactions each (the
phases are determined by the phase clock). Every agent $v$ uses the variables
$i_v$ (initialized to $0$) as a phase counter and $\ell_v$ for load balancing.
The leader $u$ has an additional variable $\K u$ in which it eventually stores
the approximation. In the first phase, every agent $v$ initializes its load
$\ell_{v}$ to $0$ (non-leaders) or to $1$ (leader). In the first interaction of
every phase, every agent $v$ increases the phase counter $i_v$ and multiplies
its load with $n^\eta$ (see \cref{ln:apx-load-explosion}). During the remainder
of the phase, all agents us the classical load balancing process from
\cite{BFKK19} to balance their tokens. The leader $u$ additionally checks
before the multiplication whether it has a load of at least $4$ (in which case
the total load is at least $2n$ \whp/). If this is the case, the leader
calculates the approximation $\K u$ as $i_u \cdot \eta -
\floor*{\log{\ell_u}}$ and raises the flag $\Done2u$, indicating that the
approximation stage has concluded (see \cref{ln:apx-decision}). The flag is
then sent to all other agents via one-way epidemics (see
\cref{ln:apx-broadcast}), and the raised flag terminates the process.
\begin{lemma} \label{lem:apx-stage}
Let $u$ be the leader. After at most $\BigO{n\log{n}}$ interactions of the
Approximation Stage all agents $v$ set $\Done2 v$ to $\ttrue$ \whp/.  At that
time, $\K u = \log{n} \pm 3$ \whp/. The protocol uses \whp/ at most
$\BigOTilde{n}$ states.%
\end{lemma}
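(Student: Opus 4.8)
The plan is to establish three things about the Approximation Stage: correctness of the computed value $\K u$, the running time, and the state bound. Throughout I would condition on the high-probability events from the auxiliary lemmas, namely that after $\BigO{n\log n}$ interactions a unique leader $u$ is elected with $\Done1$ set everywhere (\cref{lem:fast-leader-election}), that all agents share the maximal junta level $\Level* = \log\log n \pm c$ (\cref{lem:junta}), so that $n^\eta := 2^{2^{\Level*-8}}$ is indeed $n^{\eta}$ for a constant $\eta \in (0,1]$ determined by the additive error on $\Level*$, and that the phase clock delineates phases of $\BigTheta{n\log n}$ interactions (\cref{lem:phase-clocks}) long enough for classical load balancing to concentrate the load.

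For correctness I would track the total load $M$ in the system. In phase $i$ the leader has injected load so that the total is $M_i = 2^{i \cdot 2^{\Level*-8}} = n^{\eta i}$ after the load-explosion step on \cref{ln:apx-load-explosion}. I would invoke the concentration of classical load balancing (the analogue of the cited Lemmas~2 and~3 of \cite{BFKK19}, as used for \cref{lem:powers-of-two-load-balancing}) to argue that after a phase of $\BigO{n\log n}$ interactions every agent's load $\ell_v$ is within a small constant factor of the average $M_i/n$, so in particular the leader's load $\ell_u$ differs from $M_i/n$ by at most a constant factor \whp/. The stopping rule $\ell_u \geq 4$ fires in the first phase where $M_i/n$ crosses a constant threshold (here the factor $4$ together with the concentration bound guarantees $M_i \geq n/2$, hence $M_i \geq 2n$ up to the stated constants). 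At that phase $\log M_i = i \cdot 2^{\Level*-8}$, and since the leader outputs $\K u = i \cdot 2^{\Level*-8} - \floor*{\log \ell_u}$ and $\log \ell_u = \log M_i - \log n \pm \BigO{1}$ by concentration, the two large terms $\log M_i$ cancel and we are left with $\K u = \log n \pm \BigO{1}$; I would then check the constants to pin this down to the claimed $\pm 3$.

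For the running time, the total load grows by a factor $n^\eta$ per phase, so the threshold $M \geq n/2$ is reached after $\BigO{1/\eta}$ phases; since $\eta = \BigTheta{1}$ \whp/ (as $\Level* = \log\log n \pm c$ forces $2^{\Level*-8} = \BigTheta{1}$), this is $\BigO{1}$ phases, each of $\BigO{n\log n}$ interactions, giving $\BigO{n\log n}$ interactions total. The final broadcast of $\Done2$ via one-way epidemics (\cref{ln:apx-broadcast}) adds another $\BigO{n\log n}$ by \cref{lem:broadcast}. For the state bound, I would note that the only variables that can grow with $n$ are $\ell_v$ and $\K u$: each $\ell_v$ is bounded by the maximum load, which is at most the total load $M = \BigO{n}$ at the stopping phase times a constant concentration factor, so $\ell_v = \BigO{n}$ and the index $i_u = \BigO{1}$; this yields $\BigOTilde{n}$ states as claimed.

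The main obstacle I anticipate is the load-balancing concentration argument on the leader's own load. The leader does participate in the balancing on \cref{ln:apx-load-balancing}, so I must argue that a single distinguished agent's load concentrates around the average, not merely that the maximum or the discrepancy is small — the cited \cite{BFKK19} bounds control the global discrepancy, and I would need to translate this into a two-sided bound $\ell_u = M_i/n \cdot (1 \pm o(1))$ tight enough that the additive error in $\log \ell_u$ stays within the budget that produces the final $\pm 3$. Handling the boundary phase carefully — ensuring the stopping threshold is crossed cleanly and that the multiplicative slack $n^\eta$ between consecutive phases does not inflate the error beyond $\pm 3$ — is where the constants must be managed, and it is the step I would write out in full in \fullversion.
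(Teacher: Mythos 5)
Your proposal follows essentially the same route as the paper's proof: track the total load $M_i = n^{i\eta}$ across phases, use the discrepancy bound of \cite{BFKK19} to relate the leader's load to the average $M_i/n$, note that when the stopping rule $\ell_u \geq 4$ fires the total load is at least $2n$, cancel the two $\log M_i$ terms in $\K u = i_u\cdot 2^{\Level u - 8} - \lfloor\log\ell_u\rfloor$, and bound the number of phases by $\BigO{1/\eta} = \BigO{1}$. However, two of your quantitative claims are wrong as written. First, your justification for $\eta = \BigTheta{1}$ is backwards: $\Level* = \log\log n \pm c$ forces $2^{\Level*-8} = \BigTheta{\log n}$, not $\BigTheta{1}$; since $n^\eta = 2^{2^{\Level*-8}}$, it is $\eta = 2^{\Level*-8}/\log n$ that is constant. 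If $2^{\Level*-8}$ really were $\BigTheta{1}$, then $\eta$ would be $\BigTheta{1/\log n}$, you would need $\BigTheta{\log n}$ phases, and the stage would take $\BigO{n\log^2 n}$ interactions, contradicting the very bound you are proving.

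Second, and more substantively, your state bound rests on a false premise. The total load at the stopping phase is \emph{not} $\BigO{n}$: it lies between $2n$ and $6n^{1+\eta}$, and this overshoot (up to $\BigTheta{n^2}$ when $\eta$ is close to $1$) is precisely why the protocol must output $i_u \cdot 2^{\Level u-8} - \lfloor \log \ell_u\rfloor$ rather than $i_u\cdot 2^{\Level u - 8}$ alone. Hence bounding each $\ell_v$ by the total load yields only $\ell_v = \BigO{n^{1+\eta}}$, i.e., up to $\BigOmega{n^2}$ states for $\ell_v$, which misses the claimed $\BigOTilde{n}$. The correct argument (the paper's) bounds the \emph{per-agent} load: in the phase preceding the stopping phase the leader's load is still below $4$, so by the discrepancy bound every agent's load is $\BigO{1}$; after the multiplication by $n^\eta$ every load is therefore $\BigO{n^\eta} \leq \BigO{n}$, and balancing never increases the maximum. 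Finally, the obstacle you flag at the end --- needing $\ell_u = (M_i/n)(1 \pm \LittleO{1})$ --- is neither available nor needed: the additive bound $\ell_u = M_i/n \pm 1.5$ from \cite{BFKK19}, combined with $M_i/n \geq 2$ at the stopping phase, costs only an additive constant inside the logarithm, and chasing those constants (as the paper does explicitly) is exactly what yields $\log n \pm 3$.
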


\begin{proof}
\def\hati{\hat\iota}
First we bound the number of phases until all agents $v$ have $\Done2v =
\ttrue$. When the leader $u$ sets $\Done2u$ to \ttrue, all other agents follow
via one-way epidemics in the same phase, \whp/. Therefore it suffices to bound
the number of phases until the leader $u$ sets $\Done2u$ to \ttrue.

Let $\hati$ be the first phase in which the leader has a load of at least
$4$. From load balancing \cite{BFKK19} it follows that the total load $M$ at
that time is at least $2n$. At the beginning of Phase 1 the leader $u$ has
$n^{\eta}=2^{2^{\Level v-8}}$ tokens and all other agents are empty. In the
first interaction of Phase $i>1$ all agents $v$ multiply their load with
$n^{\eta}$ (see \cref{ln:apx-load-explosion}). Hence, in Phase $i\leq\hati$
the total load is $M = n^{i\eta}$, \whp/ and in Phase $\hati$ we have $2n
\leq n^{\hati\eta} \leq 6\cdot n^{1+\eta}$ (see \cite{BFKK19}).
From \cref{lem:junta} we obtain that \whp/ $\log\log{n} - 4 \leq \Level v$ and
therefore $\eta \geq 1/2^{12}$ \whp/. Since $n^{\hati\eta} \leq 6\cdot n^{1+\eta}$
we get $\hati \leq 1/\eta + 1 + \LittleO{1}$. Hence the number of phases $\hati$
until $u$ sets $\Done2u$ to \ttrue is bounded by a constant $\hati = \BigO{1}$.

We now bound the quality of the approximation $\K u$.
In Phase $\hati$, the leader sets its variable $\K u$ to $\K u = \hati \cdot 2^{\Level u - 8} - \floor*{\log \L u}$.
From load balancing \cite{BFKK19} we get $\L u \leq 2^{\hati \cdot 2^{\Level u - 8}} / n + 1.5$ and therefore
\allowdisplaybreaks
\begin{align*}
\K u &\geq \hati \cdot 2^{\Level u -8 } - \log\left(2^{\hati \cdot 2^{\Level u - 8}}/n + 1.5\right) \\
& \numberthis\label{eq:quality-of-approximation}\geq \hati \cdot 2^{\Level u - 8} - \log{\left(2^{\hati\cdot2^{\Level u - 8}}/n \cdot \left(1 + 1.5/2\right)\right)} \\
& = \hati \cdot 2^{\Level u - 8} - \log{\left(2^{\hati\cdot2^{\Level u - 8}}\right)} + \log{n} - \log{1.75} \\
& \geq
\log{n} - 3 \enspace ,
\intertext{where \eqref{eq:quality-of-approximation} holds since $2^{\hati\cdot 2^{\Level u - 8}} \geq 2n$.
Analogously,}
\K u
& \leq \hati \cdot 2^{\Level u - 8} - \log{\L u } + 1 \\
& \leq \hati \cdot 2^{\Level u - 8} - \log{\left(2^{\hati\cdot2^{\Level u - 8}}/n\cdot \left(1 - 1.5/2\right)\right)} + 1 \\
& \leq \log{n} + 3 \enspace .
\end{align*}

It remains to bound the required number of states. For each agent $v$, the
variables $i_v$ and $\K v$ require \whp/ at most $\BigO{\log{n}}$ many states,
and the variable $\ell_v$ stores up to $\BigO{n^\eta}\leq n$ many tokens.
\end{proof}

\subsection{Refining the Approximation}\label{sec:refinement-stage}
As before, we assume that a unique leader has been elected and that all
agents are synchronized via the phase clocks.
We also assume that the leader holds an approximation of $\log{n} \pm 3$.

In the Refinement Stage, every agent $v$ holds a variables $\K v$ for the
approximation of $\log{n}$ and a variable $\ell_v$ that is used for load
balancing. The protocol consists of 3 phases. The first phase is used for
initialization. The value $\K u$ (the approximation calculated by the leader in
the previous stage) is spread among all agents via one-way epidemics
(\cref{ln:refinement-init}). In the beginning of the second phase the leader
injects $2^8\cdot 2^{\K u}$ many tokens into the system which are balanced, as
before. In the beginning of third phase every agent multiplies its load with
$2^{\K u}$, and then the load is balanced again. In \cref{lem:refinement-stage}
we show that total number of tokens $M$ is bounded by $2^2\cdot n^2 \leq M \leq
2^{14}\cdot n^2$. At the end of the phase every agent can compute $n$ as its
output function $\omega(v) = \round{{2^8\cdot2^{2\K v}}/{\L v}}$.
\begin{lemma} \label{lem:refinement-stage}
After at most \BigO{n\log{n}} interactions of the Refinement Stage, all agents
$v$ output $\omega(v) = n$ \whp/. The protocol uses \whp/ at most
$\BigOTilde{n}$ states.
\end{lemma}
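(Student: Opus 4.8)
The plan is to track the total number of tokens $M$ in the system through the three phases and then verify that the claimed output function recovers $n$ exactly. First I would fix the leader's approximation $\K u$ with $\log n - 3 \leq \K u \leq \log n + 3$, as guaranteed by \cref{lem:apx-stage}, and note that the first phase merely broadcasts this value to all agents via one-way epidemics (\cref{ln:refinement-init}); by \cref{lem:broadcast} this completes within $\BigO{n\log n}$ interactions \whp/, so assuming the phase is long enough (\cref{lem:phase-clocks}) every agent enters Phase~2 holding the common value $k := \K u$ and zero load.

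Next I would compute $M$ after the two load injections. In Phase~2 the leader alone injects $2^8\cdot 2^{k}$ tokens (\cref{ln:refinement-leader-load}), so the total load is $M_2 = 2^8\cdot 2^{k}$. In Phase~3 every agent multiplies its own load by $2^{k}$ (\cref{ln:refinement-multiply}); since load balancing conserves the total, after the multiplication the total becomes $M = 2^{k}\cdot M_2 = 2^8\cdot 2^{2k}$. Substituting the bounds on $k$ gives $2^8\cdot 2^{2(\log n - 3)} \leq M \leq 2^8\cdot 2^{2(\log n + 3)}$, i.e.\ $2^{8-6}\cdot n^2 \leq M \leq 2^{8+6}\cdot n^2$, which is exactly the bound $2^2\cdot n^2 \leq M \leq 2^{14}\cdot n^2$ claimed in the statement. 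The key point is that $M$ is a deterministic function of $k$ alone, \emph{not} of the initial distribution, so the only randomness is in whether the phases are long enough for broadcast and for balancing to equalize loads.

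The final step is to show the output is exactly $n$. After classical load balancing of $M$ tokens among $n$ agents, \whp/ every agent holds either $\floor*{M/n}$ or $\ceil*{M/n}$ tokens, so $\L v \in \set{\floor*{M/n}, \ceil*{M/n}}$ and in particular $M/n - 1 < \L v < M/n + 1$. The output function is $\omega(v) = \round{2^8\cdot 2^{2k}/\L v} = \round{M/\L v}$. Since $M/n - 1 < \L v < M/n + 1$, the ratio $M/\L v$ lies strictly between $\tfrac{M}{M/n+1}$ and $\tfrac{M}{M/n-1}$, and the crucial estimate is that this interval, once $M = \BigTheta{n^2}$, has width less than $1$ around $n$, so that rounding recovers exactly $n$. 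Concretely, $\abs*{M/\L v - n} = n\cdot\abs*{M/n - \L v}/\L v \leq n\cdot 1/(M/n - 1)$, and plugging in $M \geq 2^2 n^2$ makes this bound strictly below $1/2$; hence $\round{M/\L v} = n$ for every agent.

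The main obstacle will be the last estimate: I must ensure that the rounding error is strictly below $1/2$ \emph{uniformly} over the possible loads $\L v \in \set{\floor*{M/n}, \ceil*{M/n}}$, which is precisely why the protocol inflates the load to $\BigTheta{n^2}$ rather than $\BigTheta{n}$ — the factor $2^8$ and the squaring of $2^{k}$ guarantee $M/n = \BigTheta{n}$, so the additive $\pm 1$ uncertainty in $\L v$ contributes only a $\BigTheta{1/n}$ relative error to $M/\L v$. The states bound is immediate: $\K v = \BigO{\log n}$ and $\ell_v \leq M = \BigO{n^2}$, so each agent uses $\BigOTilde{n}$ states \whp/, and combining the two $\BigO{n\log n}$-interaction phase bounds with \cref{lem:phase-clocks} gives the total interaction count.
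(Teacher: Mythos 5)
Your overall route is the same as the paper's: broadcast $\K u$ in the initialization phase, observe that after the injection and the multiplication the total load is exactly $M = 2^8\cdot 2^{2\K u}$ with $2^2 n^2 \le M \le 2^{14} n^2$, and then argue that since each agent's load deviates from the average $M/n = \BigTheta{n}$ by only an additive constant, rounding $M/\L v$ recovers $n$ exactly. However, your argument for the state bound is a non sequitur: from $\ell_v \le M = \BigO{n^2}$ you can only conclude that the load variable ranges over $\BigO{n^2}$ values, i.e.\ $\BigOTilde{n^2}$ states in total, not the claimed $\BigOTilde{n}$. The bound $\BigOTilde{n}$ needs the observation that no individual agent ever holds more than $\BigO{n}$ tokens: after the Phase-1 balancing every agent holds at most $2^8\cdot 2^{\K u}/n + \BigO{1} \le 2^{11} + \BigO{1}$ tokens (since $\K u \le \log n + 3$), the multiplication in \cref{ln:refinement-multiply} by $2^{\K u} \le 2^3 n$ then yields per-agent load $\BigO{n}$, and subsequent balancing never increases the maximum. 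This per-agent, per-phase accounting is what the paper relies on, and without it the second half of the lemma is not established.

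A second, smaller issue: you invoke a perfect-balance guarantee, namely that every agent ends with $\floor*{M/n}$ or $\ceil*{M/n}$ tokens. The load-balancing result actually available (and the one the paper uses, following \cite{BFKK19}) is weaker: $\L v = M/n + r$ with $r \in [-1.5, 1.5]$, covering a constant discrepancy plus rounding. As written, your balancing claim is unsupported. Fortunately your final estimate has enough slack to absorb the correct guarantee: with $\abs{M/n - \L v} \le 1.5$ and $M \ge 4n^2$ one still gets
\begin{equation*}
\abs*{\frac{M}{\L v} - n} \;=\; \frac{n\,\abs{M/n - \L v}}{\L v} \;\le\; \frac{1.5\, n}{4n - 1.5} \;<\; \frac{1}{2} ,
\end{equation*}
so $\round{M/\L v} = n$ survives once you replace the floor/ceiling claim by the $\pm 1.5$ bound. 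With these two repairs your proof coincides with the paper's.
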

\begin{proof}%
At the end of Phase $0$, \whp/ all agents know $\K u$ from the leader according
to one-way epidemics. At the end of Phase $1$, \whp/ $C \cdot 2^{\K u}$ tokens
have been balanced such that $\L v = \BigTheta{1}$ for any agent $v$. At the
end of Phase $2$, a total of at least $M \geq 4 \cdot n^2$ tokens have been
balanced and every agent $v$ has load $\L v = \round{C\cdot2^{2\K u}/n} \pm 1$
\whp/.

We now need to show that every agent $v$ can output $n$ using its output
function $\omega(v) = \round{{C2^{2\K{v}}}/{\L{v}}}$, provided that $v$ knows
the value $\K v = \K u = \log{n} \pm 3$ from the leader and the total load $M$
is at least $M = C\cdot2^{2\K u} \geq 4n^2$ tokens. The proof is based on the
proof of Lemma~3.8 from \cite{DBLP:journals/corr/abs-1805-04832}.

Observe that after the load balancing, \whp/ \makebox{$\L v = C\cdot2^{2\K u}/n + r$}
where $r \in [-1.5,1.5]$ (the error term $r$ covers the remaining discrepancy
and the rounding error). Let $\hat\omega(v)$ be the output of agent
$v$ before the rounding is applied. We derive, analogously
to~\cite{DBLP:journals/corr/abs-1805-04832},
\begin{align*}
\hat\omega(v) &= \frac{M}{\L{v}}
= \frac{C\cdot2^{2\K u}}{        C\cdot2^{2\K u}/n + r}
= \frac{C\cdot2^{2\K u}\cdot n}{ C\cdot2^{2\K u} + rn } \\
& = \frac{C\cdot2^{2\K u}\cdot n}{ C\cdot2^{2\K u} \left( 1 + \frac{rn}{C\cdot2^{2\K u}} \right) }
= \frac{n}{                                             1 + \frac{rn}{C\cdot2^{2\K u}} }
\intertext{which gives us, since $r \in [-1.5, 1.5]$ }
\hat\omega(v) &\geq \frac{n}{1 + \frac{1.5n}{C\cdot2^{2\K u}}}
\stackrel{\left(M \geq 4n^2\right)}{\geq} \frac{n}{1 + \frac{1.5n}{4n^2}} \\
& = \frac{n}{1+\frac{1}{3n}} = n - \left(\frac{1}{3} - \frac{1}{3(3n+1)} \right) > n - \frac{1}{2} \\
\lefttag{\text{and }}
\hat\omega(v) &\leq \frac{n}{1 - \frac{1.5n}{C\cdot2^{2\K u}}}
\stackrel{\left(M \geq 4n^2\right)}{\leq} \frac{n}{1 - \frac{1.5n}{4n^2}}\\
& = \frac{n}{1-\frac{1}{3n}} = n + \frac{1}{3} + \frac{1}{3(3n-1)} < n + \frac{1}{2} \enspace .
\end{align*}
Therefore, $\omega(v) = \round{\hat\omega(v)} = n$.
\end{proof}

\subsection{Analysis of \protocol{CountExact}}

We now show \cref{thm:counting}. Due to space limitations, we only show that
\protocol{CountExact} calculates the value of $n$ \whp/. In
\fullversion we describe how to combine the protocol with a
\emph{slow} always-correct protocol, which shows that protocol
\protocol{CountExact} stabilizes.

\begin{proof}
From \cref{lem:junta} it follows that after $\BigO{n\log{n}}$
interactions the junta is elected and all agents are inactive. From that point
on no agent in the protocol is ever re-initialized again. Let $u$ be the single
leader that \whp/ concludes the \protocol{FastLeaderElection} protocol
according to \cref{lem:fast-leader-election} after at most $\BigO{n\log{n}}$
interactions. According to \cref{lem:apx-stage}, the leader $u$ computes
$\log{n} \pm 3$ and sets $\Done2u$ to $\ttrue$ after at most $\BigO{n\log{n}}$
further interactions \whp/. Finally, according to \cref{lem:refinement-stage},
all agents output $n$ after $\BigO{n\log{n}}$ further interactions. The total
number of interactions therefore is $\BigO{n\log{n}}$. Observe that all
protocols require at most $\BigOTilde{n}$ states. This concludes the proof.
\end{proof}

In order to argue that
\protocol{CountExact} stabilizes, we use the same idea as in
\cref{sec:stable-short}, where we combined the protocol for approximate
counting with a slow protocol which is always correct. Together with an error
detection mechanism, this gives a protocol that stabilizes \whp/ in
$\BigO{n\log{n}}$ interactions such that every agent outputs the exact value
$n$, using $\BigOTilde{n}$ states. Further details can be found in
\fullversion.

\bibliographystyle{plain}
\bibliography{paper}
\clearpage
\appendix

\section{Load Balancing} \label{apx:load-balancing}

In this appendix we show \cref{lem:powers-of-two-load-balancing}. The proof
follows along the lines of the proofs of Lemmas 2 and 3 from \cite{BFKK19}. We
re-state the lemma as follows.

\restateLoadBalancing*
\begin{proof}
In the following, we will argue about individual tokens. However, this is only
used for the sake of the analysis. In the actual process, all agents store at
all times only the logarithm of their current numbers of tokens. We assume that
each agent $u$ holds a number of $\ell_u(t) = 2^{k_u(t)}$ tokens which are
stored in some order. Initially, the tokens are ordered arbitrarily. We define
the height $h(b)$ of a token $b$ as the number of tokens below $b$ w.r.t.\ that
order. Whenever an agent $u$ interacts with another agent $v$ that does not have
any load, every second token is moved from agent $u$ to agent $v$ and stored on
$v$ in the original order.
\begin{figure}[h]
\centering
\includegraphics{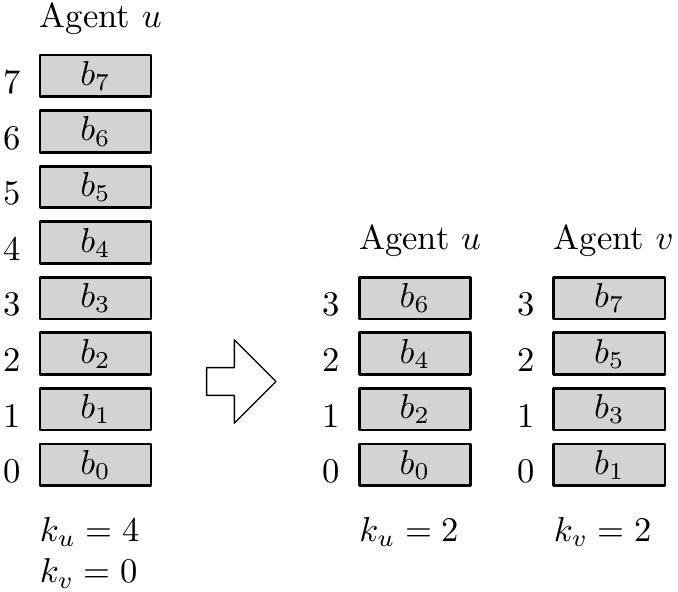}
\caption{Example: Agent $u$ moves every second token to the previously empty
agent $v$.} \label{fig:balancing-every-second}
\end{figure}
See \cref{fig:balancing-every-second} for an example. Therefore, all tokens
participating in the interaction reduce their height by a factor of at least
$2$. According to the assumptions of \cref{lem:powers-of-two-load-balancing}
we have $2^{\kappa} \leq 3/4 \cdot n$ and thus there are at most $3/4 \cdot n$
tokens in the system. Even when spread out as much as possible, these tokens
cannot take up more than $3/4\cdot n$ agents. Therefore, the probability that a
specific token $b$ of height $h(b) > 0$ reduces its height by a factor of $2$
when participating in an interaction with an empty agent is at least $1/4$.
(Note that the actual probability is bounded only by $1/4 \cdot (1-1/n)$,
since the leader does not participate in the load balancing process. For
simplicity we work with the expression $1/4$, and for $n$ large enough our
final results holds nonetheless.) We now observe that
\begin{enumerate}
\item the agent on which $b$ is stored is selected with probability $2/n$, and
\item the other agent selected does not have any load with probability $1/4$.
\end{enumerate}
Combining these two events gives us that in interaction $t$ with probability
$1/(2n)$ the token $b$ reduces its height by at least a constant factor of $2$ or
has already reached height $0$. We denote this event as a \emph{good
interaction $t$ for token $b$}. We now consider $16 n \log{n}$ interactions and
use $X_{b,i}$ to denote an indicator random variable for good interactions of
token $b$. Since for one specific token the interactions occur independently
from each other, we may apply Chernoff bounds to $X_b = \sum_{i=1}^{16 n
\log{n}}X_{b,i}$ and obtain
\begin{equation*}
\Probability{X_b < \log{n}-1} < \Exp{-\frac{49}{16}\log{n}} < n^{-3} \enspace.
\end{equation*}
From the union bound over all tokens we obtain that after at most $16 n \log{n}$
interactions all tokens have reached height $0$ with probability $1-n^{-2}$.
\end{proof}

\section{The Stable Protocol} \label{apx:stable}

In this appendix we show that we can build a stable protocol using only an additional
multiplicative factor of $\BigO{\log{n}}$ states. We first describe our error
detection mechanisms and then show \cref{thm:approx-2} of \cref{thm:approx}.
Finally, we show how to reduce the required number of states by accepting
a small number of agents that do not output the correct result.

Our approach is closely related to the idea of building a hybrid protocol in
\cite{GS18} (see also \cref{sec:auxiliary-protocols}). We need to handle two
possible sources of errors. The first type of error comes from leader election,
where it may happen that no agent ever completes the fast leader election
protocol, or multiple leaders do. The second type of error occurs when the
\Search Protocol fails.

If no agent completes leader election, then no agent will ever set $\Done1{}$. In
that case, we use the following solution: From the very beginning, all agents
run a slow backup protocol in parallel. (We give a description of the backup
protocol in \cref{apx:backup-protocol}.) Only if an agent $v$ completes the
leader election protocol and sets $\Done1{v}$ to $\ttrue$, it stops the
execution of the slow backup protocol and proceeds with the \Search Protocol,
which is governed by the leader.

If multiple agents conclude the leader election stage as a leader, then this
error can be detected whenever two leaders interact with each other. The agent
$v$ which detected the error raises an error flag $\Error v$. This flag is
adopted by all other agents using one-way epidemics. When an agent sets its error
flag, it ignores all of its previous computations and executes a new instance
of the backup protocol. Note that stopping the backup protocol while running
the \Search Protocol allows us to save states by re-using the states from the
backup protocol (at costs of having to re-start the backup protocol in case of
a later error).

The second type of error occurs when some agents fail to properly execute the
\Search Protocol. E.g., either load balancing or maximum broadcast could fail,
resulting in an incorrect value $\K u$ at the leader $u$. In the
remainder of this section, we describe how such an error can be detected
after the execution of the \Search Protocol. As before, whenever an agent
detects an error, it raises the error flag and switches to a new instance of
the backup protocol.

\bigskip

To detect an error during the execution of the \Search Protocol, all agents use
after the \Search Protocol (i.e., when $\Done2{v} = \ttrue$ for agent $v$) an
additional error detection protocol \protocol{ErrorDetection} defined in
\cref{alg:error-detection}. This protocol is enabled by replacing the
Broadcasting Stage from the \protocol{Approximate} (\cref{alg:approx}) with the
Error Detection Stage by calling the \protocol{ErrorDetection} protocol
(\cref{alg:error-detection}) as follows:

\begin{lstalgo}[H]{Additional Lines for Protocol \protocol{Approximate} to enable error detection\label{alg:approx-with-error-detection}}
Updated lines of protocol #\protocol{Approximate}:\setcounter{lstnumber}{\value{lstnumberAlgApproximate}}\addtocounter{lstnumber}{-1}#

	if $\Done1 u$ and $\Done2 u$ then #\comment{Stage 3: Error Detection Stage}#
		$\text{\protocol{ErrorDetection}}\,[(\K u, \Done2 u), (\K v, \Done2 v)]$
\end{lstalgo}

The error detection protocol runs in $5$ phases. We assume that all agents start
counting the phases over from $0$ when they enter the Error Detection Stage,
and they stop counting phases when they have reached phase $4$. The protocol
requires at each agent $v$ in addition to the states from \cref{alg:search} a
flag $\Error v$ which is initially set to \tfalse and an additional variable
$\L v \in \set{0, 1, \dots, 32}$ used for load balancing. Intuitively, the
leader $u$ balances $2^{\K u}$ tokens among the agents, and the resulting load
at each agent allows the leader to validate its result. The phases of the
\protocol{ErrorDetection} protocol have the following purposes.
\begin{enumerate}
\item
In the first phase ($\Phase u = 0$), the leader $u$ initializes another agent
in its first interaction with a load of $2^{\K{u}-2}$ tokens stored in powers of two
(\cref{ln:load-error-detection}). Note that the $-2$ in the exponent is required
that the powers-of-two load balancing process perfectly balances all tokens \whp/,
and it will be compensated in the third
phase. All non-leader agents set their load to $-1$ once they enter the Error
Detection stage (\cref{ln:enter-error-detection}).
\item
In the second phase, all agents $v$ except the leader perform the powers-of-two load
balancing procedure on their $\K{v}$ values.
\item
In the third phase, every agent $v$ initializes its additional token counter $\L
v$ either to $0$ if $\K v = -1$ (the agent did not have any tokens in the
powers-of-two load balancing process) or to $32$ if $\K v = 0$ (the agent had one
token in the powers-of-two load balancing process). Here, the factor of $32$
over-compensates for the smaller number of tokens initially used in the first
phase.
\item In the fourth phase, all agents $v$ perform the classical load balancing process
from \cite{BFKK19} on their $\L{v}$-values.
\item In the last phase, all agents perform the following. Every agent $v$ checks if
its remaining load $\L v$ is at least $3$. If this is not the case, an error has
occurred, the agent raises the error flag $\Error v$ and moves over to the backup protocol.
Furthermore, every agent $v$ checks whether the remaining discrepancy is at most $2$.
In addition, once an agent $v$ has reached $\Phase v = 4$, it stops its phase
clock. In that way, it runs through the error detection protocol exactly
once and then remains in Phase~4 forever (unless an error is detected).
\end{enumerate}

The output of every agent $v$ is the value $\K u$ from the leader $u$, which is set in \cref{ln:set-output-error-detection}. It is then sent to all agents in Phase~4 via one-way epidemics.

\begin{lstalgo}[H]{Error Detection protocol in protocol \protocol{Approximate}\label{alg:error-detection}}
Interaction $(u, v)$ in protocol #\protocol{ErrorDetection}:#

	if $\Done2 v = \tfalse$ then #\comment{a new agent enters error detection}#
		$\left(\K v,\, \L v,\, \Done2 v\right) \gets (-1,\, 0,\, \ttrue)$#\label{ln:enter-error-detection}#

	if $\Phase u' = 0$ and $\FirstTick u = \ttrue$ and $\Leader u = \ttrue$ then
		$\K v \gets \K u - 2$ #\comment{load infusion}\label{ln:load-error-detection}#

	if $\Phase u' = 1$ and $\Leader u = \Leader v = \tfalse$ then #\comment{powers-of-two load balancing}#
		if $\min\set{\K u, \K v} = -1$ and $\max\set{\K u, \K v}>0$ then 
			$\K u, \K v \gets \max\set{\K u, \K v} - 1$ 

	if $\Phase u' = 2$ and $\FirstTick u = \ttrue$ then #\comment{initialize load balancing with constantly many tokens per agent}#
		if $\K u = -1$ or $\Leader u = \ttrue$ then
			$\L u \gets 0$
		else if $\K u = 0$ then
			$\L u \gets 32$
		else
			$\Error u \gets \ttrue$

	if $\Phase u' = 3$ then #\comment{classical load balancing}#
		$\left(\L u, \L v\right) \gets \left( \floor*{(\L u + \L v)/2} , \ceil*{(\L u + \L v) / 2} \right)$
	
	if $\Phase u' = 4$ then
		if $\Leader u = \ttrue$ and $\FirstTick u = \ttrue$ then
			$\K u \gets \round{\K u + 3 - \log{\L u}}$#\label{ln:set-output-error-detection}\comment{compute approximation of $\log{n}$}#
		if $\L u < 3$ or $\abs{\L u - \L v} > 2$ then
			$\Error u \gets \ttrue$ #\label{ln:balancing-error}\comment{balancing error detected}#
		$\K u \gets \max\set{\K u, \K v}$ #\comment{broadcast result from the leader}#
		stop the phase clock.
\end{lstalgo}

We now put everything together and show the stability of \cref{alg:approx}.
Recall that in \cref{sec:analysis-search-protocol} we have shown that the protocol works as claimed \whp/. It remains to show that the protocol stabilizes.

\begin{proof}[Proof of \cref{thm:approx-2} and \cref{thm:approx-3} of \cref{thm:approx}]

Recall that in any error case, all agents $v$ raise their $\Error{v}$ flag. The
$\Error{v}$ flag is always adopted by all other agents using one-way epidemics.
Whenever an agent $v$ sets its $\Error{v}$ flag, it ignores all of its previous
computations and starts to execute a new instance of the backup protocol. The
backup protocol stabilizes after $\BigO{n\log^2{n}}$ interactions \whp/ (see
\cref{lem:backup-protocol}) in \cref{apx:backup-protocol}.

We first show that the protocol stabilizes unless an error is detected.

\Whp/, the phase clocks work correctly such that the phase length is
sufficiently long enough for load balancing to complete (see
\cref{lem:phase-clocks}). The powers-of-two load balancing process balances
then the load in Phase $1$ such that all agents $v$ (except the leader $u$) have a
value $\K v \in \set{-1, 0}$ (see \cref{lem:powers-of-two-load-balancing}). The
resulting number of tokens is then multiplied with 32 and balanced in Phase 3 
using the classical load balancing process from \cite{BFKK19}.

From the result on load balancing we obtain the following.
\begin{itemize}
\item
From \cref{lem:search} we obtain that $\K u \geq \floor*{\log{n}}$ \whp/. Therefore,
at least $2^{\K u - 2} \cdot 32 > 4 \cdot n $ tokens are balanced over $n$
agents. From \cite[Theorem 1]{BFKK19} we obtain that after balancing every agent
has load at least $3$ \whp/. Otherwise, the error flag is raised by every agent
which has load less than $3$ (see \cref{ln:balancing-error}).

\item
As before,
the error flag is raised by every agent which observes a balancing error.
However, note that after balancing the discrepancy is at most $2$ \whp/
\cite{BFKK19}. Therefore, a load balancing error does not occur \whp/ and
thus in \cref{ln:balancing-error} of \cref{alg:error-detection} the error flag
remains unset \whp/.
\end{itemize}

Once the load balancing has successfully completed such that $\L v \geq
3$ at all agents $v$, for all pairs of agents $(u, v)$ we have $\abs{\L u - \L v} \leq 2$, all
agents are in Phase~4, and all agents have adopted the maximal value $\K u$ from the leader $u$,
the protocol does not provide any transitions due to which an agent could change its
output. Together this implies that the protocol stabilizes \whp/ if $\K u \in
\set{ \floor*{\log{n}}, \ceil*{\log{n}}}$ after the \Search Protocol.

Observe that the leader $u$ computes its approximation in the
\protocol{ErrorDetection} protocol as $ \K u = \round{\K u + 3 - \log{\L u}}$
(see \cref{ln:set-output-error-detection}). We use $\L u = \round{32\cdot2^{\K
u}}\pm1$ and get that $\K u = \log{n} \pm \delta$ for some $\delta < 1$.
Therefore, $\K u \in \set{\floor*{\log{n}}, \ceil*{\log{n}}}$ provided the load
balancing has concluded with a maximal remaining discrepancy of $2$.

To now show \cref{thm:approx-2} of \cref{thm:approx} we handle the various sources
of errors one after the other.
\begin{itemize}
\item Leader Election Errors:\par
If no leader concludes the leader election protocol from \cite{GS18}, then
no agent ever sets $\Done1{}$ to \ttrue. In that case all agents output the
result from the backup protocol indefinitely.

If multiple agents conclude the leader election protocol, then
this is detected when two of them interact with each other. In that case,
both agents raise the error flag, causing the system to stabilize
due to the backup protocol.

\item Synchronization Errors:\par
Observe that the $\Done2{}$ flag is transmitted to all agents via one-way
epidemics (\cref{ln:enter-error-detection} of \cref{alg:error-detection}). Once
the flag is set, the \protocol{ErrorDetection} protocol runs in 5 phases
counted in the $\Phase{}'$ variable. It is crucial that this number of phases is a
constant, and that the phase clock \emph{gets stuck} in Phase $4$.

Recall that we define that the phase clocks are updated at the beginning of an
interaction. If two agents interact thereafter that have different $\Phase{}'$
values, the execution has become asynchronous and the participating agents raise
the $\Error{}$ flag. Note that this also implies that every agent initiates at
least one interaction of every phase (thereby executing the special
instructions conditioned on $\FirstTick{} = \ttrue$), since otherwise a phase
difference of at least 1 is detected in a later interaction.

\item Errors in the \Search Protocol:\par
During the \Search Protocol, no explicit error detection mechanisms are used.
However, observe that as soon as a leader is present the \Search Protocol, the
total load in the system starts to grow over time. This follows from the fact
that the phase clock performs an infinite sequence of ticks with probability
$1$. Therefore the value $\K u$ from the leader is incremented repeatedly, and
therefore the total load in the system grows such that the leader will
eventually conclude the \Search Protocol.

At the time when the leader $u$ concludes the \Search Protocol, its estimate
$\K u$ of $\log{n}$ might be too small. However, this error is detected in the
\protocol{ErrorDetection} protocol and the error flag is raised by some agent.
As soon as the error flag is raised, all agents adopt the error flag via
one-way epidemics, and the protocol then stabilizes via the backup protocol.
\end{itemize}

It remains to show that the protocol stabilizes \whp/ within
$\BigO{n\log^2{n}}$ interactions. Recall that without the
\protocol{ErrorDetection} protocol the protocol converges \whp/ within
$\BigO{n\log^2{n}}$ interactions (see \cref{thm:approx-1} of
\cref{thm:approx}). The \protocol{ErrorDetection} protocol runs in $5$
additional phases. Together this gives us that \whp/ the protocol stabilizes in
$\BigO{n\log^2{n}}$ interactions.

Regarding the required number of states, we observe that the
\protocol{ErrorDetection} protocol operates on the same states as the \Search
Protocol, and the only non-constant size variable of any agent $u$ is $\K u$,
which uses \whp/ up to $\log{n}$ states -- as in the \Search Protocol. Except
for the Backup Protocol, the stable protocol therefore does not require more
states then the \Search Protocol. The Backup Protocol requires
$\BigO{\log^2{n}}$ states (see \cref{apx:backup-protocol}. Since it has to be
run in parallel until an agent $u$ sets $\Done1u$ to \ttrue, our protocol
requires in total a number of $\BigO{\log^2{n}\log\log{n}}$ states \whp/.

It remains to show that we can further reduce the required number of states if
we slightly relax the output definition such that only all but $\log{n}$ agents
output the correct result. This is achieved by using a slightly modified backup
protocol which requires only \BigO{\log{n}} states, but up to $\log{n}$ agents
do not output the correct result.
\end{proof}

\section{Backup Protocols} \label{apx:backup-protocol}

\subsection{Backup Protocol for Approximate Counting}
\def\kmax_#1{k^{\text{\normalfont max}}_{#1}}
\def\k_#1{k^{\phantom{\text{\normalfont i}}}_{#1}}
In this appendix we define our backup protocol to compute $\floor*{\log{n}}$,
using at most $\log^2{n}$ states.

In the slow but stable backup protocol, each agent starts with one token. Agents
then keep combining tokens in groups of sizes which are powers of two. Each
agent $v$ keeps a pair $(\k_v, \kmax_v)$, where $\k_v$ is the number of tokens
in this agent and $\kmax_v$ is the largest $\k_w$ that agent $v$ is aware of. Let
$\k_v = -1$ represent an empty agent which does not have any tokens. The initial
state of an agent is $(0, 0)$, that is, initially each agent has one token. During
an interaction of two agents $(u, v)$, there are now two possible cases. If both
agents have the same number of tokens, than the first agent takes all of them.
Otherwise, if the two agents have different numbers of tokens, no exchange of
tokens is possible. In both cases, the agents always update the maximum number
of tokens, which is stored in the $\kmax_{}$-values. Formally, the protocol has
the transitions
\begin{align}
\delta \left[ \left(\k_u, \kmax_u \right),\left(\k_v, \kmax_v \right) \right] & =
\begin{cases}
	\left[ \left(\k_u+1, K \right),\left(-1, K \right) \right] & \text{ if } \k_u = \k_v \geq 0\\
	\left[ \left(\k_u, K \right),\left(\k_v, K \right)\right] & \text{ otherwise, }
\end{cases} \\
\lefttag{\text{where}} K &= \max\set{\kmax_u, \kmax_v} \enspace .
\end{align}
We now show the following result for the backup protocol.
\begin{lemma} \label{lem:backup-protocol}
Let $K_i = \set{u : \k_u = i }$ be the set of agents which have $\k_u = i$ and
let $n_i$ be the $i$-th bit of the binary representation of $n$. Then the
backup protocol converges to a configuration where
\begin{itemize}
\item $\abs{K_i} = n_i$ for all $i \geq 0$ and
\item $\max_{u}\set{\k_u} = \floor{\log{n}}$ and $\kmax_v = \floor{\log{n}}$
for all agents $v$.
\end{itemize}
It uses at most $(\log{n}+1)^2$ states and stabilizes \whp/ within
$\BigO{n^2\log^2{n}}$ interactions.
\end{lemma}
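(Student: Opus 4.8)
The plan is to separate the two purely combinatorial bullet points from the running-time bound, which is the probabilistic heart of the statement. Throughout I read the state $\k_u=i\ge 0$ as agent $u$ holding $2^{i}$ tokens and $\k_u=-1$ as an empty agent, so that the merge rule $(\k_u,\k_v)=(i,i)\mapsto(i{+}1,-1)$ conserves the total number of tokens. Since every agent starts with one token, the total is always exactly $n$ and every nonempty agent holds a power of two; hence at every time the configuration encodes a partition of $n$ into powers of two, and no agent can ever reach a level above $\floor{\log n}$.

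First I would settle correctness. The $\k$-values form a stable configuration exactly when no two agents share the same nonnegative level (otherwise their interaction triggers a merge and leaves the set of reached configurations). So in a stable configuration the occupied levels are \emph{distinct} powers of two summing to $n$; as the binary expansion is the unique way to write $n$ as a sum of distinct powers of two, this forces $\abs{K_i}=n_i$ for all $i$ and makes the top occupied level the leading bit $\floor{\log n}$. A convenient confirmation is that, writing $B_i$ for the number of agents that eventually pass through level $i$, we have $B_0=n$ and $B_{i+1}=\floor{B_i/2}$, hence $B_i=\floor{n/2^{i}}$ and the residual count at level $i$ is $B_i\bmod 2=n_i$. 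For the $\kmax$-values, observe they are monotone nondecreasing, never exceed $\floor{\log n}$, and are propagated by maximum broadcast; once the unique top agent at level $\floor{\log n}$ has appeared, \cref{lem:broadcast} spreads $\kmax_v=\floor{\log n}$ to all agents within $\BigO{n\log n}$ further interactions, after which no transition changes any state and the protocol is stable.

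For the running time I would analyse the merges as a level-indexed coalescence and split the process into a \emph{bulk} phase and a \emph{cleanup} phase. Let $A_i(t)$ be the number of agents at level $i$ and $m(t)=\sum_i A_i(t)$ the number of nonempty agents; $m$ starts at $n$, decreases by exactly one per merge, and ends at $s_2(n)\le\floor{\log n}+1$, the number of one-bits of $n$. Because at most $L:=\floor{\log n}+1$ levels are ever occupied, convexity of $\binom{\cdot}{2}$ gives $\sum_i\binom{A_i}{2}\ge\BigOmega{m^2/\log n}$ whenever $m\ge 2L$, so a merge occurs in a single interaction with probability $\BigOmega{m^2/(n^2\log n)}$. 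The expected time to halve $m$ is thus $\BigO{n^2\log n/m}$, and summing over the halvings $m=n,n/2,\dots,2L$ telescopes (the $\log n$ cancels against $\sum 1/m=\BigO{1/L}$) to only $\BigO{n^2}$ expected interactions to drive $m$ down to $\BigO{\log n}$; a Chernoff argument applied epoch-by-epoch upgrades this to $\BigO{n^2\log n}$ \whp/. In the cleanup phase at most $\BigO{\log n}$ merges remain, and each is slowest when exactly two agents occupy the relevant level, in which case it is a geometric event of success probability at least $2/(n(n-1))$ and so completes within $\BigO{n^2\log n}$ interactions \whp/; a union bound over the $\BigO{\log n}$ remaining merges yields $\BigO{n^2\log^2 n}$ interactions \whp/. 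Adding the $\BigO{n\log n}$ cost of the final $\kmax$-broadcast leaves this bound unchanged.

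Finally, every reachable state is a pair $(\k_v,\kmax_v)$ with $-1\le\k_v\le\kmax_v\le\floor{\log n}$ and $\kmax_v\ge 0$; counting these feasible pairs gives $(\floor{\log n}+1)(\floor{\log n}+4)/2\le(\log n+1)^2$ for $n\ge 4$. I expect the main obstacle to be making the time analysis rigorous across concurrent levels: specifically (i) combining the convexity lower bound on mergeable pairs with the concentration needed to convert the $\BigO{n^2}$ expected bulk time into a \whp/ bound, and (ii) checking that during cleanup, arrivals from lower levels can only accelerate a pending merge, so that the two-agent worst case legitimately dominates each of the remaining $\BigO{\log n}$ merges.
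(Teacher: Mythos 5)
Your proposal is correct, and while the correctness and state-count parts match the paper's reasoning, your running-time analysis takes a genuinely different route. The paper settles the levels one at a time by induction on $j$: once every level $i<j$ holds exactly $n_i$ agents, no new tokens can arrive at level $j$, and a coupon-collector argument merges the agents at level $j$ within $\BigO{n^2\log n}$ interactions w.h.p.; summing over the $\lfloor\log n\rfloor+1$ levels gives $\BigO{n^2\log^2 n}$, after which the maximum is spread by one-way epidemics (\cref{lem:broadcast}). Your correctness argument (conservation of tokens plus uniqueness of the binary representation, equivalently $\lfloor n/2^j\rfloor$ tokens ever reaching level $j$) is the same counting the paper uses. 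For the time bound, however, you track the single global potential $m(t)$, the number of nonempty agents, and split the run into a bulk phase---where the convexity bound on same-level pairs gives merge probability $\BigOmega{m^2/(n^2\log n)}$ per interaction and the halving/telescoping argument gives $\BigO{n^2\log n}$ interactions w.h.p.---and a cleanup phase of at most $\BigO{\log n}$ residual merges, each dominated by a geometric waiting time with success probability at least $2/(n(n-1))$. This buys two things the paper's proof does not give: it shows the bulk coalescence finishes after only $\BigO{n^2}$ expected interactions, isolating the last $\BigO{\log n}$ merges as the true bottleneck, and it does not require the levels to settle in increasing order, which is the hypothesis the paper's induction leans on; the paper's induction is in turn shorter and exhibits the per-level structure directly. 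Two minor remarks: your cleanup phase does not actually need the ``arrivals from below only accelerate'' caveat you flag as an obstacle---it suffices that every non-stable configuration contains at least one mergeable pair, so each inter-merge waiting time is dominated by a geometric with success probability $2/(n(n-1))$ and there are at most $\BigO{\log n}$ merges left; and your sharper state count via the invariant $k_v\le k^{\max}_v$ holds only under the (clearly intended) reading that a merging agent folds its new level into its own $k^{\max}$ value, whereas the paper simply bounds the ranges of the two coordinates separately by $\lfloor\log n\rfloor+1$ each.
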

The proof follows straightforwardly from the coupon collector's problem. For
completeness, we formally show the claim as follows.

\begin{proof}
We show by induction on $j$ in $\set{0,1, \dots, \floor*{\log{n}}}$ that after
$j \cdot \BigO{n\log{n}}$ interactions for all $i < j$ it holds that $\abs{K_i}
= n_i$ \whp/.

The base case trivially holds. For the induction step, fix a $j \in \set{0,1,
\dots, \floor*{\log{n}}}$. From the induction hypothesis we obtain that for any
level $i < j$ there are $n_i$ token \emph{stuck} on that level. Note that $n_i
\leq 1$: this means that no further tokens from any agent in $K_i$ with $i < j$
can be merged, and therefore $\abs{K_j}$ cannot increase \whp/. We now apply a
standard coupon collecting argument: After $\BigO{n^2 \log{n}}$ interactions,
as many tokens as possible on level $j$ have been merged \whp/. Thus,
$\abs{K_j}$ is either $0$ or $1$ after $j \cdot \BigO{n\log{n}}$ interactions
\whp/. It remains to argue that $\abs{K_j} = n_j$.

From the binary representation of $n$ it follows that the total number of
tokens that can ever reach level $j$ is $\hat{k}_j = \left(n - \sum_{i=0}^{j-1}
n_j\cdot2^{i}\right) / 2^{j} = \floor*{n / 2^{j}}$. If $n_j = 0$, $\hat{k}_j$
is even. All $\hat{k}_j$ tokens can therefore be merged to level $j$ such that
eventually $\abs{K_j} = 0$. If $n_j = 1$, a single token remains on level $j$
indefinitely. This shows that $\abs{K_j} = n_j$ after $j \cdot \BigO{n^2
\log{n}}$ interactions.

From the binary representation of $n$ it furthermore follows that eventually
$\abs{K_{\floor*{\log{n}}}} = 1$. Let $u \in K_{\floor*{\log{n}}}$ at some time
$t$ and observe that $\k_{u} = \floor*{\log{n}} > \k_{v}$ for any $v \neq u$.
According to \cref{lem:broadcast} on one-way epidemics it follows that
$\kmax_{u} = \k_{u}$ is transmitted to all other agents after additional
$\BigO{n \log{n}}$ interactions after time $t$ \whp/.

Finally, the protocol requires at each agent at most $\floor*{\log{n}}+1$ states
for both, the $\k_u$ and the $\kmax_u$ values. This concludes the proof. 
\end{proof}

\subsection{Backup Protocol for Exact Counting}
\def\Counted#1{c_{#1}}
\def\Max#1{n_{#1}}
In the backup protocol for exact counting, every agent $u$ has variables $(\Counted u, \Max u)$
initially set to $(\tfalse, 1)$. Interactions are defined as
\begin{equation}\label{eq:backup-protocol-exact}
[(\Counted u', \Max u'),\, (\Counted v', \Max v')] = \begin{cases}
[(\tfalse, \Max u + \Max v),\, (\ttrue, \Max u + \Max v)] & \text{if } \Counted u = \Counted v = \tfalse \\
[(\Counted u, \max\set{\Max u, \Max v},\, (\Counted v, \max\set{\Max u, \Max v})] & \text{otherwise.}
\end{cases}
\end{equation}
Intuitively, in this protocol every agent starts with one token and a bit
$\Counted u$ indicating whether its token has been counted. Agents combine
tokens until eventually a single agent holds all $n$ tokens. Those agents which
have already been counted broadcast the maximum value they have observed so
far. The following lemma follows directly from coupon collecting.
\begin{lemma}
The backup protocol defined via \cref{eq:backup-protocol-exact} stabilizes
\whp/ within $\BigO{n^2\log{n}}$ interactions. When it stabilizes, every agent
outputs the exact value $n$.
\end{lemma}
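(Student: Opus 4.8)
The plan is to read off correctness from two structural invariants and then bound the running time by a coalescence argument followed by a broadcast. Call an agent \emph{active} when $\Counted u = \tfalse$ and \emph{inactive} otherwise, so that initially every agent is active with $\Max u = 1$. The only transition in \cref{eq:backup-protocol-exact} that sums counts or changes active/inactive status is the first branch, which fires exactly when two active agents meet: it replaces two active agents carrying $a$ and $b$ tokens by a single active agent carrying $a+b$ together with one newly inactivated agent. I would therefore prove, by induction over interactions, that (i) the number of active agents is nonincreasing and decreases by exactly one at each such merge, and (ii) the sum $\sum_{u:\,\Counted u = \tfalse}\Max u$ is invariant and equals $n$. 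The step I expect to require the most care — and the main obstacle — is verifying that the second branch (the $\max$ rule) never corrupts the running total carried by an active accumulator: one must show that adopting the maximum only ever raises the $\Max$-values of \emph{inactive} agents toward the current global maximum while leaving the active accumulators' counts untouched. This ``the broadcast sees but never feeds the accumulators'' property is precisely what pins the total at $n$, and once it is in hand invariant (ii) is immediate.

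Next I would bound the number of interactions until a single active agent remains. While $j \ge 2$ agents are active, a uniformly chosen ordered pair consists of two active agents with probability $p_j = \frac{j(j-1)}{n(n-1)}$, so the number of interactions needed to drop from $j$ to $j-1$ active agents is geometric with mean $1/p_j = \frac{n(n-1)}{j(j-1)}$. Summing the means telescopes, $\sum_{j=2}^{n}\frac{n(n-1)}{j(j-1)} = n(n-1)\bigl(1 - \tfrac1n\bigr) = \BigO{n^2}$, with the final merge ($j=2$) dominating. For the \whp/ statement I would assign each step a budget $t_j = \frac{c\log n}{p_j}$, bound each geometric tail by $\Probability{\text{step }j > t_j} \le e^{-p_j t_j} \le n^{-c}$, and take a union bound over the $n-1$ steps; the budgets telescope to $\sum_j t_j = \BigO{n^2\log n}$, giving that a unique active agent remains after $\BigO{n^2\log n}$ interactions \whp/.

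Finally I would argue stabilization and correctness. Once only one active agent $u$ remains, invariant (ii) forces $\Max u = n$, and $u$ can never be inactivated (the first branch needs a second active partner), so its count is frozen at $n$; since no further merges occur, $n$ is the global maximum and can never be exceeded. The second branch now acts as one-way epidemics spreading the value $n$, so by \cref{lem:broadcast} every agent has $\Max u = n$ after $\BigO{n\log n}$ additional interactions \whp/. In the resulting configuration every transition is idempotent — there are no two active agents to merge, and every $\max$ is taken over equal values $n$ — so the configuration is stable and every agent outputs $n$. Combining the $\BigO{n^2\log n}$ coalescence bound with the $\BigO{n\log n}$ broadcast bound yields stabilization within $\BigO{n^2\log n}$ interactions \whp/, as claimed.
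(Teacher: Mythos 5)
Your overall architecture (coalescence of uncounted agents, then max-broadcast, then idempotence) is the natural one, and your running-time analysis is correct: the per-step waiting times are geometric with mean $n(n-1)/(j(j-1))$, the $c\log n$ budget per step with a union bound needs no independence between steps (each tail bound holds conditionally on the past), and the budgets sum to $\BigO{n^2\log n}$. But the step you yourself single out as the main obstacle --- that the $\max$ rule ``sees but never feeds the accumulators'' --- is not provable: it is false for the transition exactly as written in \cref{eq:backup-protocol-exact}. The ``otherwise'' branch assigns $\max\{n_u, n_v\}$ to \emph{both} agents, irrespective of their counted bits $c_u, c_v$. Hence an uncounted agent that meets a counted agent carrying a larger broadcast value has its running count overwritten upward, and your invariant (ii) --- that the sum of values over uncounted agents stays equal to $n$ --- breaks. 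A concrete three-agent execution: agents $1,2$ merge (agent $1$ uncounted with value $2$, agent $2$ counted with value $2$); then agent $3$ (uncounted, value $1$) meets agent $2$ and both take the maximum, so agent $3$ becomes uncounted with value $2$; then agents $1,3$ merge to value $4$; the maximum $4$ is broadcast and the system stabilizes with every agent outputting $4 \neq 3$. For large $n$ such corruptions are likely rather than rare: near the end of the run the current maximum is relayed through the many counted agents in $\BigO{n\log n}$ interactions, while the remaining merges need $\BigTheta{n^2}$ interactions, so a lagging uncounted agent is almost surely corrupted before it merges. So there is a genuine gap; in fact the lemma's correctness claim fails for the protocol as literally defined, and no proof can close the gap without changing the transition.

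The defect is in the protocol definition rather than in your plan, and you deserve credit for pinpointing exactly the property that fails. The repair is the one implicitly used by the paper's backup protocol for \emph{approximate} counting, which keeps the accumulator and the broadcast value in separate variables ($k_u$ versus $k^{\mathrm{max}}_u$ there): give each agent a count and a separate broadcast field, or equivalently let the ``otherwise'' branch overwrite only counted agents' values. With that change your proof goes through essentially verbatim --- invariants (i) and (ii) hold by inspection of the two branches, the unique surviving uncounted agent's value is pinned to $n$, every other value in the system is a partial sum bounded by $n$, and \cref{lem:broadcast} finishes stabilization in $\BigO{n\log n}$ further interactions. Note also that the paper itself offers no written proof (it asserts the lemma ``follows directly from coupon collecting''), so your coalescence argument, once placed on top of the corrected transition, would be a genuine addition rather than a reproduction.
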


\section{Fast Leader Election using $n^\epsilon$ States} \label{apx:leader-election}
\def\Coins#1{\ell_{#1}}
The following protocol performs leader election in $\BigO{n\log{n}}$
interactions using $\BigO{n^\epsilon}$ many states for some small constant
$\epsilon > 0$. It is a special case of the protocol presented in
\cite{DBLP:journals/corr/abs-1805-04586}.
It has the property that there is always at least one leader,
and when the first agent sets $\Done1 u$ to \ttrue, there is at most one leader
\whp/.

To simplify the description of the protocol, we assume that agents can toss a
random coin and generate one random bit uniformly and independently in each
interaction. While this is not covered by the actual population model, the
randomness from the scheduler can be exploited. This technique has been called
\emph{synthetic coins} by Alistarh et al.\
\cite{DBLP:conf/soda/AlistarhAEGR17}, and a simple analysis has been presented
in \cite{DBLP:conf/soda/BerenbrinkKKO18}. Intuitively, the idea is that every
agent keeps track of the parity of its current interaction number by flipping a
bit in every interaction. This bit then functions as a source of randomness for
the interaction partner.

The \protocol{FastLeaderElection} protocol operates for each agent $u$ on the
variables $\Leader u$, $\Coins{u}$ and $\Counter{u}$. The variable $\Leader u$
indicates whether $u$ is still a leader contender. When the protocol has
reached $\Phase u = 16$, it concludes by setting $\Done1 u$ to \ttrue. To get
an estimate of $\log{n}$, it uses the $\Level u$ variable from the
\protocol{Junta} process.

The protocol then runs in multiple phases. In even phases, all leader contender
agents sample $\BigTheta{\log{n}}$ random bits. In odd phases, the maximum of
the numbers represented by these random bits is sent to all agents via one-way
epidemics. If a leader contender $u$ observes a larger value in the broadcast,
it becomes a follower by setting $\Leader u$ to \tfalse. The protocol is
specified in \cref{alg:leader-election}

\begin{lstalgo}[H]{Protocol \protocol{FastLeaderElection} used in \cref{alg:exact}\label{alg:leader-election}}
Interaction $[(\Leader u, \Done1 u),\, (\Leader v, \Done1 v)]$ in the #\protocol{FastLeaderElection}# protocol:

	if $\Phase u = \Phase v$ then

		if $\Phase u$ is even and $\FirstTick u = \ttrue$ then #\comment{initialize a new round}#
			$\Coins{u}, \Counter{u} \gets 0$

		if $\Phase u$ is even and $\Leader u = \ttrue$ and $\Counter u < 2^{\Level u-8}$ then #\lcomment{generate random number}#
			$\Coins{u} [\Counter{u}] \gets \begin{cases} 0 & \text{with probability } 1/2 \\ 1 & \text{with probability } 1/2 \end{cases}$ #\comment{set $\Counter u$-th bit of $\Coins u$ to random value}#
			$\Counter u \gets \Counter u + 1$

		if $\Phase u$ is odd and $\Coins u < \Coins v$ then #\comment{$u$ observed another leader}#
			$\Leader u \gets \tfalse$#\label{ln:become-follower}#
			$\Coins u \gets \Coins v$ #\comment{adopt the maximum value}#

		if $\Phase u = 2^{13}$ then
			$\Done1 u \gets \ttrue$
\end{lstalgo}

We now show \cref{lem:fast-leader-election} from \cref{sec:auxiliary-protocols}.

\restateFastLeaderElection*

\begin{proof}
First, observe that the protocol runs in constantly many phases. Therefore,
every agent sets its flag $\Done1{}$ to \ttrue after at most \BigO{n\log{n}}
interactions \whp/.

Secondly, observe that there is always at least one leader contender. For every
phase $t$, let $L(t)$ be the set of agents $u$ which have sampled the maximum
value $\Coins u$. By definition, $L(t)$ is non-empty. Any contender can only
become a follower in \cref{ln:become-follower}, and due to that rule it is
impossible that any agent in the set $L(t)$ becomes a follower. Assume towards a
contradiction that an agent $w$ in $L(t)$ becomes a follower. This can only occur
if $\Coins w < \Coins v$ in some interaction with another agent $v$. However,
this is a contradiction to $w \in L(t)$, the set of agents $u$ with largest
value $\Coins u$.

Now consider an arbitrary but fixed pair of agents $(u, v)$. If $u$ and $v$ do
not sample precisely the same bits in every even phase, at least one of them
will become a follower in the next odd phase. The probability $p$ that both
agents sample the same sequence of bits during the entire course of the protocol
is
\[ 
p = 1/2^{2^{13}\cdot 2^{\Level u - 8}} \enspace .
\]
For the maximal junta level $\Level*$ in the modified junta process it
holds that $\Level* \geq \log\log{n} - 8$ \whp/, see \cref{lem:junta}.
We therefore get that
\[
p \leq 1/2^{2^{13}\cdot 2^{\log\log{n} -8}} = 1/n^2 \enspace .
\]
We now use the above observation that there is always at least one leader
contender. Let $u$ be a leader contender that concludes the
\protocol{FastLeaderElection} protocol by setting $\Done1{u}$ to \ttrue. We
take the union bound over all other agents $v \neq u$ and obtain that with
probably at most $1/n$ there exists a second leader contender.
\end{proof}

\section{Omitted Proofs from \cref{sec:counting}} \label{apx:omitted-proofs}

\begin{proof}[Proof of \cref{lem:refinement-stage}]
At the end of Phase $0$, \whp/ all agents know $\K u$ from the leader according
to one-way epidemics. At the end of Phase $1$, \whp/ $C \cdot 2^{\K u}$ tokens
have been balanced such that $\L v = \BigTheta{1}$ for any agent $v$. At the
end of Phase $2$, a total of at least $M \geq 4 \cdot n^2$ tokens have been
balanced and every agent $v$ has load $\L v = \round*{C\cdot2^{2\K u}/n} \pm 1$
\whp/.

We now need to show that every agent $v$ can output $n$ using its output
function $\omega(v) = \round*{{C2^{2\K{v}}}/{\L{v}}}$, provided that $v$ knows
the value $\K v = \K u = \log{n} \pm 3$ from the leader and the total load $M$
is at least $M = C\cdot2^{2\K u} \geq 4n^2$ tokens. The proof is based on the
proof of Lemma~3.8 from \cite{DBLP:journals/corr/abs-1805-04832}.

Observe that after the load balancing, \whp/ $\L v = C\cdot2^{2\K u}/n + r$
where $r \in [-1.5,1.5]$ (the error term $r$ covers the remaining discrepancy
plus $0.5$ for the rounding error). Let $\hat\omega(v)$ be the output of agent
$v$ before the rounding is applied. We derive, analogously to
\cite{DBLP:journals/corr/abs-1805-04832},
\begin{align*}
\hat\omega(v) &= \frac{M}{\L{v}}
= \frac{C\cdot2^{2\K u}}{        C\cdot2^{2\K u}/n + r}
= \frac{C\cdot2^{2\K u}\cdot n}{ C\cdot2^{2\K u} + rn }
= \frac{C\cdot2^{2\K u}\cdot n}{ C\cdot2^{2\K u} \left( 1 + \frac{rn}{C\cdot2^{2\K u}} \right) }
= \frac{n}{                                             1 + \frac{rn}{C\cdot2^{2\K u}} }
\intertext{which gives us, since $r \in [-1.5, 1.5]$ }
\hat\omega(v) &\geq \frac{n}{1 + \frac{1.5n}{C\cdot2^{2\K u}}}
\stackrel{\left(M \geq 4n^2\right)}{\geq} \frac{n}{1 + \frac{1.5n}{4n^2}}
= \frac{n}{1+\frac{1}{3n}} = n - \left(\frac{1}{3} - \frac{1}{3(3n+1)} \right) > n - \frac{1}{2} \\
\lefttag{\text{and }}
\hat\omega(v) &\leq \frac{n}{1 - \frac{1.5n}{C\cdot2^{2\K u}}}
\stackrel{\left(M \geq 4n^2\right)}{\leq} \frac{n}{1 - \frac{1.5n}{4n^2}}
= \frac{n}{1-\frac{1}{3n}} = n + \frac{1}{3} + \frac{1}{3(3n-1)} < n + \frac{1}{2} \enspace .
\end{align*}
Therefore, $\omega(v) = \round*{\hat\omega(v)} = n$.
\end{proof}

\section{Stable Exact Counting} \label{apx:stable-exact}

In this appendix we describe how to build a stable protocol to count 
the exact population size. 
We use a similar approach as in \cref{sec:approx}. We check for
errors during the course of the protocol, and if we detect an error, the
interacting agents raise an error flag. This error flag is sent to all other
agents via one-way epidemics. Every agent which has set the error flag performs a
simple backup protocol described in \cref{apx:backup-protocol} that outputs the
correct result with probability $1$ after $\BigO{n^2\log{n}}$ interactions
\whp/.

We first describe which error detection mechanisms are additionally required
for \cref{alg:exact} and then we show that the combined protocol stabilizes.

For the \protocol{FastLeaderElection} protocol (see \cref{apx:leader-election})
we know that there is always at least one leader. If, however, more than two
leaders conclude the leader election protocol, this error can be detected
whenever they directly interact with each other. In this case they both raise
the error flag.

The second type of error we detect during the course of the protocol are
synchronization problems with the phase clock. We assume that all agents count
the exact number of phases (until the last phase of the
\protocol{RefinementStage} protocol). When now two agents interact that have a
different phase counter, both agents raise the error flag.

Finally, in the \protocol{RefinementStage} protocol all agents $v$ verify in
\cref{ln:refinement-multiply} that they have a load of at least $\L v \geq 2^5$
before multiplying their load with $2^{\K v}$. At the same time, all agents start
to compare their $\K v$ values. If any agent does not have sufficiently many
tokens, or if two agents interact with different $\K v$ values, the agents
raise the error flag.

We now give the full proof for \cref{thm:counting}, which is re-stated for
completeness as follows.

{
\def\footnote#1{}
\restatethmcounting*
}

\begin{proof}
There is always at least one leader. If, however, multiple leaders conclude the \protocol{FastLeaderElection} protocol, this error is detected and the protocol
stabilizes due to the backup protocol.
We therefore consider in the following only the case when
precisely one leader concludes the \protocol{FastLeaderElection} protocol.

Note that In the way we elect the junta there is always at least one junta
agent. Therefore the phase clocks perform an infinite sequence of ticks with
probability $1$. However, we cannot guarantee that all agents are always in
exactly the same phase w.r.t.\ their ticks, and the length of the phases might
diverge drastically from the expectation (albeit only with low probability).
Now since all agents count the exact number of phases, whenever two agents
interact that have a different phase counter, both agents raise the error
flag and the protocol stabilizes due to the backup protocol.

The presence of a unique leader guarantees that in the \protocol{ApproximationStage}
protocol the total load is initially at least $1$. Recall that the phase clock
ticks with probability $1$, even if the agents are not properly synchronized.
Therefore, the load grows in the \protocol{ApproximationStage} protocol over
time such that eventually a leader $u$ concludes the protocol in
\cref{ln:apx-decision} by setting $\Done2 u $ to \ttrue.
Therefore, with probability $1$ the leader
starts the \protocol{RefinementStage} protocol by setting its load $\L u$ to
$C\cdot2^{\K u}$. In the \protocol{RefinementStage} protocol
all agents $v$ verify that they have a load of
at least $\L v \geq 2^5-1.5$ before multiplying their load with $2^{\K v}$.
If this
is the case for every agent, the value $\K u$ from the leader $u$ has been at
least $\log{n}-3$, and the total load is at least $4n^2$ as required in the proof
of \cref{lem:refinement-stage}.
Otherwise, the under-loaded agents raise the error flag
since the total load is insufficient to exactly compute $n$. Again, the protocol stabilizes due to the backup protocol.

If all agents have passed this last error check in Phase 2 of the
\protocol{RefinementStage} protocol (and all agents $v$ hold the same value $\K v$), then all agents will eventually output $n$ with
probability $1$. Otherwise, at least one agent $v$ is in the error state (or will enter the error state when comparing its $\K v$ value) and thus with probability $1$ all
agents will eventually output $n$ via the backup protocol.

It remains to argue the required number of states and the time until the protocol stabilizes.

For the required number of states, observe that for each agent $u$ \whp/ $\Level
u = \BigO{\log\log{n}}$ (see \cref{lem:junta}), $i_u = \BigO{1}$ and $\K u \leq
\log{n} + 3$ (see \cref{lem:apx-stage}), and thus $\L u$ holds at most
$\BigO{n}$ tokens in the \protocol{RefinementStage} protocol. The total number
of states is therefore \whp/ at most $\BigO{\log\log{n}\cdot \log{n} \cdot n}$.

The protocol stabilizes once all agents have concluded the second phase of the
\protocol{RefinementStage} protocol. The \protocol{FastLeaderElection} protocol
requires \whp/ at most $\BigO{n\log{n}}$ interactions (see
\cref{lem:fast-leader-election}), the \protocol{ApproximationStage} protocol also
requires \whp/ at most $\BigO{n\log{n}}$ interactions (see
\cref{lem:apx-stage}), and the \protocol{RefinementStage} protocol requires $3$
phases, i.e., $\BigO{n\log{n}}$ interactions, as well. Therefore the protocol
\protocol{CountExact} stabilizes within $\BigO{n\log{n}}$ interactions \whp/.
\end{proof}

\end{document}